\renewcommand{\o}[1]{\overline{#1}}
\gdef\reallynopagebreak{%
	\par\nopagebreak\@nobreaktrue%
}%
\newtheoremstyle{thmstyle}
				{10pt}
				{5pt}
				{\hangindent=0pt}
				{}
				{\bfseries \sffamily}
				{:\reallynopagebreak}
				{\newline}
				{}
\newtheoremstyle{remarkstyle}
				{10pt}
				{5pt}
				{\hangindent=0pt}
				{}
				{\bfseries \sffamily}
				{:\reallynopagebreak}
				{\newline}
				{}
\newtheorem{thmcounter}{}
\theoremstyle{thmstyle}
\newtheorem{theorem}[thmcounter]{Theorem}
\newtheorem{lemma}[thmcounter]{Lemma}
\theoremstyle{remarkstyle}
\newtheorem{ndefinition}[thmcounter]{Definition}
\newtheorem{nexample}[thmcounter]{Example}
\newtheorem{ncomment}[thmcounter]{Remark}
\newtheorem{nobservation}[thmcounter]{Observation}
\newenvironment{definition}{\begin{ndefinition}}{\mbox{}\hfill$\lhd$\end{ndefinition}}
\newenvironment{observation}{\begin{nobservation}}{\mbox{}\hfill$\lhd$\end{nobservation}}
\renewenvironment{proof}[1][\proofname]{\par
  \pushQED{\qed}%
  \normalfont \topsep6\p@\@plus6\p@\relax
  \trivlist
  \item[\hskip\labelsep
        \bfseries \sffamily #1\@addpunct{:}]
        \mbox{}\noindent\ignorespaces
}{%
  \popQED\endtrivlist\@endpefalse
}
\global\long\def\NP{\mathcal{NP}}
\DeclareMathOperator{\sgn}{sgn}
\DeclareMathOperator{\SP}{SP}
\renewcommand{\o}[1]{%
	\overline{#1}%
}
\renewcommand{\u}[1]{%
	\underline{#1}%
}
\newcommand{\scaledtikz}[3][]{\begin{tikzpicture}\node[scale=#2] {\begin{tikzpicture}[#1] #3 \end{tikzpicture}};\end{tikzpicture}}
\newcommand{\Overset}[3][0pt]{%
	\ensuremath{\overset{\raise#1\hbox{\scriptsize\ensuremath{#2}}}{#3}}%
}
\newcommand{\citets}[1]{\citeauthor{#1}'s (\citeyear{#1})}
\def\OO{\mathcal{O}}
\def\OOT{\mathcal{\widetilde{O}}}
\definecolor{darkgreen}{rgb}{0,0.5,0}
\definecolor{halfgray}{gray}{0.55}
\tikzstyle{knoten}=[circle,draw=black,thin,fill=white,inner sep=0pt,minimum size=4.5mm]
\tikzstyle{knotenklein}=[circle,draw=black,thin,fill=white,inner sep=0pt,minimum size=2.5mm]
\tikzstyle{rot}=[draw=red,solid,thick]
\tikzstyle{blau}=[draw=blue,dashed,thick]
\tikzstyle{schwarz}=[draw=black,dotted,thick]
\tikzstyle{grau}=[draw=gray,dotted,thin]
\tikzstyle{gruen}=[draw=darkgreen,dashdotted,thick]
\tikzstyle{lila}=[draw=orange,dotted,thick]
\tikzset{dotted/.style={dash pattern=on 1pt off 2pt}}
\tikzset{dashed/.style={dash pattern=on 4pt off 3pt}}
\tikzset{dashdotted/.style={dash pattern=on 1pt off 2pt on 4pt off 2pt}}
\tikzset{dashdodotted/.style={dash pattern=on 1pt off 2pt on 1pt off 2pt on 4pt off 2pt}}
\title{A Generalized Approximation Framework for Fractional Network Flow and Packing Problems\thanks{This work was partially supported by the German Federal Ministry of Education and Research within the project ``SinOptiKom -- Cross-sectoral Optimization of Transformation Processes in Municipal Infrastructures in Rural Areas''.}}
\author{Michael Holzhauser \and Sven O. Krumke}
\date{\vspace{3mm}\normalsize University of Kaiserslautern, Department of Mathematics\vspace{-3mm}}
\begin{document}
	\maketitle

	\begin{abstract}
		We generalize the fractional packing framework of \citet{GargKoenemann} to the case of linear fractional packing problems over polyhedral cones. More precisely, we provide approximation algorithms for problems of the form~$\max\{c^T x : Ax \leq b, x \in C \}$, where the matrix~$A$ contains no negative entries and $C$ is a cone that is generated by a finite set~$S$ of non-negative vectors. While the cone is allowed to require an exponential-sized representation, we assume that we can access it via one of three types of oracles. For each of these oracles, we present positive results for the approximability of the packing problem. In contrast to other frameworks, the presented one allows the use of arbitrary linear objective functions and can be applied to a large class of packing problems without much effort. In particular, our framework instantly allows to derive fast and simple fully polynomial-time approximation algorithms (FPTASs) for a large set of network flow problems, such as budget-constrained versions of traditional network flows, multicommodity flows, or generalized flows. Some of these FPTASs represent the first ones of their kind, while others match existing results but offer a much simpler proof.
	\end{abstract}

	\section{Introduction}
	\label{sec:Introduction}

	In a fractional linear packing problem, one seeks to find a solution to the problem $\max\{c^Tx : Ax \leq b, x \geq 0 \}$, where the matrix~$A \in \mathbb{N}^{m \times n}_{\geq 0}$ contains no negative entries and, without loss of generality, the vectors~$c \in \mathbb{N}_{> 0}^n$ and $b \in \mathbb{N}_{> 0}^m$ have positive entries. Many problems can be formulated as packing problems, possibly the most intuitive being the fractional knapsack problem (cf. \citep{Knapsack}). More than this, many network flow problems can be seen as fractional packing problems if one allows exponential sized representations. For example, the traditional maximum flow problem can be seen as the problem of packing flows on $s$-$t$-paths without violating the capacities of the edges.

	A large number of authors presented approximation frameworks for such fractional packing problems, including \citet{PlotkinShmoysTardosFractionalPacking}, \citet{GrigoriadisKhachiyanBlockAngular,GrigoriadisKhachiyanParallelDecomposition}, \citet{YoungRandomizedRounding}, and \citet{BienstockIyengarFractionalPacking} (cf. \citep{BienstockLPApprox,GargKoenemann} for an overview of these results). One of the most powerful frameworks among these was developed by \citet{GargKoenemann}, who have shown that a $(1 - \varepsilon)$-approximate solution to a general fractional packing problem of the above form can be computed efficiently \emph{provided} we are able to ``handle'' the dual problem appropriately. More precisely, in the dual formulation~$\min\{b^T y: A^T y \geq c, y \geq 0\}$, we need to be able to determine a \emph{most violated dual constraint} efficiently: For some given infeasible solution~$y$ to the dual, we need to find a dual constraint\footnote{We use the notation~$B_{l\cdot}$ for a matrix~$B$ to denote the $l$-th row of the matrix.}~$(A^T)_{l\cdot} y \geq c_l$ that minimizes the value $\frac{(A^T)_{l\cdot} y}{c_l}$ among all dual constraints with a positive right-hand side value. This constraint reveals the largest degree of violation. Since the result may even hold if the number of variables in the primal formulation is of exponential size, the authors were able to provide efficient FPTASs for network flow problems such as multicommodity flow problems by using (exponential-sized) path-based formulations of the corresponding problems (cf. \citet{GargKoenemann}). \citet{FleischerMulticommodity} later showed that it suffices to determine an \emph{approximately most violated dual constraint}, whose fraction $\frac{(A^T)_{l\cdot} y}{c_l}$ may be up to a factor~$1 + \varepsilon$ away from the largest violation.

	In this paper, we generalize the result of \citet{GargKoenemann} to the case of packing problems over polyhedral cones. More precisely, we are interested in approximate solutions to problems of the form
	\begin{subequations}\label{eqn:OriginalProblem}%
	\begin{align}%
		\max\ & c^T x \\
		\text{s.t.}\ & Ax \leq b, \\
		& x \in C,
	\end{align}%
	\end{subequations}%
	where the matrix~$A \in \mathbb{N}^{m \times n}_{\geq 0}$ contains no negative numbers and where the cone~$C$ is finitely generated by a set~$S$ of non-negative vectors. While the vector $b \in \mathbb{N}^m_{> 0}$ is still assumed to contain positive entries (without loss of generality), we allow the entries of the vector~$c \in \mathbb{N}^n$ to have arbitrary signs. We will thereby combine a large set of well-known techniques such as the fractional packing framework of \citet{GargKoenemann}, the extension of \citet{FleischerMulticommodity}, the parametric search technique of \citet{MegiddoCombinatorialOptimization}, geometric-mean binary search due to \citet{HassinGeometricBinarySearch}, and transformation strategies for fractional objectives as described in \citet{LawlerCombinatorialOptimization}.

	The chosen formulation is motivated by the following observation: Network flow problems often allow a characterization by some kind of \emph{flow decomposition}, i.e., each feasible flow is representable by the sum of flows on much simpler structures, which we call \emph{basic components} in the following. Viewed from the other side, we can express each such feasible flow as a conic combination of flows on these basic components. Hence, if $C$ describes the cone that is generated by flows on basic components, we can express structural properties of each flow by the containment in the cone. What usually remains are packing constraints that bound the total flow on each edge, the flow that leaves some node, or the overall costs of the flow. As it will be shown, most common network flow problems can be modeled in such a way.

	One major advantage of the presented framework is that we do not assume the cone~$C$ or the set~$S$ that generates it to be given explicitly. Instead, we only assume that we have some kind of oracle access to the cone, which allows us to derive polynomial-time algorithms for problems that require cones with an exponential number of extreme rays. In addition to this benefit, our framework provides the following advantages: 
	\begin{itemize}
		\item The framework allows to derive fast and simple combinatorial FPTASs for a large class of packing problems and network flow problems. In many cases, we are even able to derive first strongly polynomial-time FPTASs.
		\item As our problem is formulated as a packing problem, the addition of further budget-constraints does not influence the applicability of the procedure, whereas such constraints usually make the design of exact algorithms significantly harder.
		\item The application of the framework only requires two ``ingredients'', namely the existence of some kind of flow decomposition theorem and a decent amount of control over the resulting basic components.
		\item As the framework is based on the approximation scheme of \citet{GargKoenemann} in its core, it works without considering some kind of residual network in case of network flow problems. As a consequence, the framework can be applied to problems that do not offer a concept of residual networks. One example is the maximum flow problem in generalized processing networks that will be discussed later. Moreover, it maintains properties of the underlying network such as cycle freeness or signs of costs.
		\item In contrast to the framework of \citet{GargKoenemann}, our formulation allows to stick to the natural edge-based formulations of the corresponding network flow problems and does not require an explicit reformulation of the problem as a packing problem.
		\item Moreover, the presented framework is the first application of the procedure of \citet{GargKoenemann} that natively supports the use of arbitrary linear objective functions, which allows the application to minimum cost flow problems. To the best of our knowledge, all prior applications instead transformed the objective functions into budget-constraints and searched for the optimal budget, which requires the restriction to positive costs and which results in weakly polynomial running times (cf. \citep{GargKoenemann,FleischerMulticommodity,WayneApproximation}).
	\end{itemize}

	The paper is structured as follows: In Section~\ref{sec:Preliminaries}, we briefly introduce the concepts that are inherent to our framework such as the procedure of \citet{GargKoenemann} and Megiddo's parametric search technique \citep{MegiddoCombinatorialOptimization,MegiddoParallel}. In Section~\ref{sec:Cones}, we reformulate the given problem~\eqref{eqn:OriginalProblem} as a packing problem and identify a subproblem that needs to be solved in each iteration of the algorithm. Moreover, we introduce three oracle types that will be investigated in the rest of the paper: a \emph{minimizing oracle} returning a cost-minimal vector in the ground set, a \emph{sign oracle} only returning a vector with the same sign as a cost minimal vector, and a \emph{separation oracle} either returning a vector with negative costs or stating that there is no such vector. Based on these considerations, we describe the general procedure in Section~\ref{sec:GeneralAlgorithm} and show that we can approximate problem~\eqref{eqn:OriginalProblem} efficiently if we are able to find a sufficiently good initial lower bound on the most violated dual constraint. In Section~\ref{sec:LowerBound}, we provide both weakly polynomial-time and strongly polynomial-time approaches to find such a lower bound. Finally, we apply our framework to a large class of network flow and packing problems in Section~\ref{sec:Applications}, including the maximum/minimum cost flow problem, generalized minimum cost flow problem, and the maximum/minimum cost flow problem in processing networks as well as budget-constrained versions of these. Moreover, we apply our framework to the maximum (weighted) spanning tree packing problem and the maximum (weighted) matroid base packing problem as examples of ``pure'' combinatorial problems. In Table~\ref{tab:Introduction:Results}, we give an overview of the results that will be derived in Section~\ref{sec:Applications}.

	\begin{table}[ht!]
		\def\tabularxcolumn#1{m{#1}}
		\def\arraystretch{1.5}
		\small
		\begin{tabularx}{\textwidth}{|X|X|X|}\hline
			\textbf{Problem} & \textbf{Previous best FPTAS} & \textbf{Our FPTAS} \\\hline
			Budget-Constrained Maximum Flow Problem & --- & $\OO\left(\frac{1}{\varepsilon^2} \cdot m \log m \cdot (m + n\log n) \right)$ \\\hline
			Budget-Constrained Minimum Cost Flow Problem & $\OOT\left(\frac{1}{\varepsilon^2} \cdot (n m^2 + n^3 m)) \right)$ \newline \citep{BudgetConstrainedComplexityApproximability} & $\OOT\left( \frac{1}{\varepsilon^2} \cdot n m^2 \right)$ \\\hline
			Budget-Constrained Minimum Cost Generalized Flow Problem & $\OOT\left( \frac{1}{\varepsilon^2} \cdot nm^2 \cdot (\log \frac{1}{\varepsilon} + \log\log M) \right)$ \newline (positive costs) \newline \citep{WayneApproximation} & $\OOT\left( \frac{1}{\varepsilon^2} \cdot n^2 m^2 \right)$ \newline (arbitrary costs) \\\hline
			Maximum Flows in Generalized Processing Networks & $\OO\left( \frac{1}{\varepsilon^2} \cdot m^2 \log m \right)$ \newline \citep{MaxFlowsInGeneralizedProcessingNetworks} & $\OO\left( \frac{1}{\varepsilon^2} \cdot m^2 \log m \right)$ \newline (simpler proof) \\\hline
			Minimum Cost Flows in Generalized Processing Networks & --- & $\OO\left( \frac{1}{\varepsilon^2} \cdot m^2 \log m \right)$ \\\hline
			Maximum Concurrent Flow Problem & $\OOT\left(\frac{1}{\varepsilon^2} \cdot (m^2 + kn) \right)$ \newline \citep{KarakostasConcurrentFlowProblem} & $\OOT\left(\frac{1}{\varepsilon^2} \cdot m^2 \cdot \min\{k,n\}\right)$ \newline (simpler proof) \\\hline
			Maximum Weighted Multicommodity Flow Problem & $\OOT\left( \frac{1}{\varepsilon^2} \cdot m^2 \cdot \min\{\log C, k\} \right)$ \newline \citep{FleischerMulticommodity} & $\OOT\left( \frac{1}{\varepsilon^2} \cdot m^2 \right)$ \\\hline
			Maximum Spanning Tree Packing Problem & $\OO\left(n^3m \log (n^2 / m) \right)$ \newline (exact algorithm) \newline \citep{GabowPackingTrees} & $\OO\left( \frac{1}{\varepsilon^2} \cdot m^2 \log m \cdot \alpha(m,n) \right)$ \\\hline
			Maximum Weighted Spanning Tree Packing Problem & --- & $\OO\left( \frac{1}{\varepsilon^2} \cdot m^2 \log m \cdot \alpha(m,n) \right)$ \\\hline
			Maximum Matroid Base Packing Problem & $\OO\left(m^7 \cdot F(m) \right)$ \newline (exact algorithm) \newline \citep[p. 734]{Schrijver2} & $\OO\left( \frac{1}{\varepsilon^2} \cdot m^2 \log m \cdot (F(m) + \log m) \right)$ \\\hline
			Maximum Weighted Matroid Base Packing Problem & --- & $\OOT\left( m \cdot F(m) \cdot \left(\frac{1}{\varepsilon^2} \cdot m + \log\log M \right) \right)$ \\\hline
		\end{tabularx}
		\caption{A summary of our results. Here, $m$ and $n$ denote the number of edges and nodes, respectively. The variable~$k$ denotes the number of commodities and $C$ denotes the largest ratio of any two weights of commodities. $M$ is the largest integer given in the input. The notation $\OOT(\cdot)$ ignores poly-logarithmic factors in $m$, so $\OOT(f(n,m)) = \OO(f(n,m) \cdot \log^q m)$ for some constant~$q$. $\alpha(m,n)$ denotes the inverse Ackermann function. $F(m)$ is the running time of a independence testing oracle for a given matroid with $m$ elements in its ground set.}
		\label{tab:Introduction:Results}
	\end{table}

	\section{Preliminaries}
	\label{sec:Preliminaries}

	\subsection{Approximation Algorithms}
	\label{sec:Pre:Approx}

	An algorithm~$A$ is called a (polynomial-time) \emph{approximation algorithm with performance guarantee $\alpha \in [1,\infty)$} or simply an \emph{$\alpha$--approximation} for a maximization problem~$\Pi$ with objective function~$c$ if, for each instance~$I$ of $\Pi$ with optimum solution~$x^*$, it computes a feasible solution~$x$ with objective value $c(x) \geq \frac{1}{\alpha} c(x^*)$ in polynomial time. An algorithm~$A$ that receives as input an instance~$I \in \Pi$ and a real number~$\varepsilon \in (0,1)$ is called a \emph{polynomial-time approximation scheme (PTAS)} if, on input~$(I,\varepsilon)$, it computes a feasible solution~$x$ with objective value $c(x) \geq (1 - \varepsilon) \cdot c(x^*)$ with a running-time that is polynomial in the encoding size~$|I|$ of $I$. If this running-time is additionally polynomial in $\frac{1}{\varepsilon}$, the algorithm is called a \emph{fully polynomial-time approximation scheme (FPTAS)}.

	\subsection{Garg and Koenemann's Fractional Packing Framework}
	\label{sec:Pre:GargKoenemann}

	Consider a fractional packing problem of the form~$\max\{c^T x : Ax \leq b, x \geq 0\}$ with non-negative entries in the matrix~$A \in \mathbb{N}^{m \times n}_{\geq 0}$. For example, we can model the maximum flow problem in such a way by interpreting the variables as flows on $s$-$t$-paths and requiring that the sum of flows on all paths that share some specific edge~$e$ is bounded by the edge's capacity. Hence, the vector~$b$ corresponds to the capacities of the edges and the vector~$c$ equals the all-one vector. Note that we need to stick to the path-based formulation of the problem since we are not allowed to introduce flow conservation constraints as they require negative coefficients. The dual formulation of the general primal problem is given as $\min\{b^T y : A^T y \geq c, y \geq 0\}$. In the example, the dual problem is to find small edge-lengths such that each path has length at least one. Although both the primal and the dual formulation of the problem are of exponential size in general, the fractional packing framework of \citet{GargKoenemann} allows us to obtain approximate solutions for the primal problem by using these formulations only implicitly, which will be shown in the following.

	Suppose that we want to find an $(1-\varepsilon)$-approximate solution for the primal problem with $\varepsilon \in (0,1)$ and let $\varepsilon' \colonequals \frac{\varepsilon}{2}$. The procedure described in \citep{GargKoenemann} starts with the feasible primal solution~$x = 0$ and the infeasible dual solution given by $y_i \colonequals \frac{\delta}{b_i} > 0$ for each $i \in \{1,\ldots,m\}$, where $\delta \colonequals \frac{(1+\varepsilon')}{\left((1+\varepsilon')m \right)^{\frac{1}{\varepsilon'}}}$. In each step of the algorithm, the \emph{most violated dual constraint} is determined based on the current dual solution~$y$, i.e., we determine a row~$j$ in the dual formulation that minimizes $\frac{(A^T)_{j\cdot} y}{c_j}$ among all rows with negative right-hand side value. For example, although there are exponentially many constraints in the dual formulation of the maximum flow problem, we can find the most violated constraint in $\OO(m + n\log n)$~time by computing a shortest $s$-$t$-path with edge lengths given by the dual vector~$y$. We then increase~$x_j$ by the (in terms of the primal problem) maximum allowed value~$\nu \colonequals \min_{i \in \{1,\ldots,m\}: A_{ij} > 0} \frac{b_i}{A_{ij}}$ (i.e., in the example, we send $\nu$ units of flow on the shortest path \emph{without} considering flow that has been sent in previous iterations), which will most likely make the primal solution infeasible. At the same time, each variable~$y_i$ will be multiplied by a factor of $\left(1 + \varepsilon' \cdot \frac{\nu}{\frac{b_i}{A_{ij}}} \right)$. Intuitively, for the maximum flow problem, the ``congested'' edges will get ``longer'' over time and will, thus, be used less likely in future iterations, which somehow balances the flow among all paths.

	The algorithm stops as soon as the dual solution fulfills $\sum_{i \in \{1,\ldots,m\}} b_i \cdot y_i \geq 1$. As noted above, the primal solution will most likely be infeasible since, in each iteration, the primal variables are increased regardless of the previous values. However, \citet{GargKoenemann} show that we obtain a feasible primal solution by scaling down the solution~$x$ by $\log_{1+\varepsilon'} \frac{1+\varepsilon'}{\delta}$ and that this solution is within a factor~$(1 - 2\varepsilon')=(1-\varepsilon)$ of the optimal solution. Moreover, they prove that the described procedure terminates within $\frac{1}{\varepsilon'} \cdot m \cdot (1 + \log_{1+\varepsilon'} m) = \OO\left(\frac{1}{\varepsilon^2} \cdot m \log m \right)$ iterations. We refer to \citep{GargKoenemann,FleischerMulticommodity,WayneApproximation} for further details on the procedure.

	In a follow-up publication, \citet{FleischerMulticommodity} showed that it suffices to determine an \emph{approximately most violated dual constraint} in each iteration of the problem: The claimed time bound and approximation guarantee continue to hold even if we choose a dual constraint $(A^T)_{j\cdot} y \geq c_j$ only fulfilling $\frac{(A^T)_{j\cdot} y}{c_j} \leq (1 + \varepsilon) \cdot \min_{l \in \{1,\ldots,n\}: c_l > 0} \frac{(A^T)_{l\cdot} y}{c_l}$. We will make use of this observation in Section~\ref{sec:GeneralAlgorithm}.

	\subsection{Megiddo's Parametric Search Technique}
	\label{sec:Pre:Megiddo}

	In Section~\ref{sec:Applications}, we make use of Megiddo's parametric search technique (cf. \citep{MegiddoCombinatorialOptimization}). Since we will go into the very heart of the method, we will briefly describe its idea in the following. We refer the reader to \citep{MegiddoCombinatorialOptimization} and \citep{MegiddoParallel} for further details on the method.

	Assume that we want to solve an optimization problem~$\Pi$ for which we already know an (exact) algorithm~$A$ that solves the problem, but in which some of the input values are now \emph{linear parametric values} that depend linearly on some real parameter~$\lambda$. Moreover, suppose that an algorithm~$C$ is known (in the following called \emph{callback}) that is able to decide if some candidate value for $\lambda$ is smaller, larger, or equal to the value~$\lambda^*$ that leads to an optimum solution to the underlying problem~$\Pi$. The idea of the parametric search technique is to extend the input of $A$ from constants to affine functions depending on $\lambda$ and to simulate the execution of algorithm~$A$ step by step. Note that each variable remains its affine structure as long as the algorithm only performs additions, subtractions, multiplications with constants, and comparisons. We call such an algorithm \emph{strongly combinatorial} in the following. Throughout the execution, an interval~$I$ is maintained that is known to contain the optimal value~$\lambda^*$. As soon as the simulation reaches the comparisons of two linear parametric values, since both values depend linearly on $\lambda$, it either holds that one of the variables is always larger than or equal to the other one in $I$ (in which case the result of the comparison is independent from $\lambda$) or that there is a unique intersection point~$\lambda'$. For this intersection point, we evaluate the callback~$C$ in order to determine if $\lambda' < \lambda^*$, $\lambda' > \lambda^*$, or $\lambda' = \lambda^*$ and, thus, resolve the comparison, update the interval~$I$, and continue the execution. Hence, as soon as the simulation of $A$ finishes, we have obtained an optimum solution to $\Pi$. The overall running-time is given by the running-time of $A$ times the running-time of $C$ and can be further improved using parallelization techniques described in \citep{MegiddoParallel}. We refer to \citep{MegiddoCombinatorialOptimization,MegiddoParallel} for details on the parametric search technique. Further applications and extensions of parametric search techniques can moreover be found in \citep{CohenFixedDimension,ToledoFixedDimension,ToledoApproximateParametricSearching}.

	\section{Packing over Cones}
	\label{sec:Cones}

	In this section, we transform the problem~\eqref{eqn:OriginalProblem} to a general (possibly exponential-sized) fractional packing problem in a first step and then reduce this problem to a more simple subproblem by incorporating the fractional packing framework of \citet{GargKoenemann}. Moreover, we introduce three types of oracles that enable us access to the cone~$C$ and that will be used throughout the rest of the paper.

	Let $S \colonequals \{ x^{(1)}, \ldots, x^{(k)} \}$ denote a finite set of $k$~non-negative $n$-dimensional vectors~$x^{(l)} \in \mathbb{R}^n_{\geq 0}$ with $x^{(l)} \neq 0$. The cone that is spanned by these vectors is given by
	\begin{align}\label{eqn:ConeDefinition}
		C \colonequals \left\{ x \in \mathbb{R}^n : x = \sum_{l=1}^k \alpha_l \cdot x^{(l)} \text{ with } \alpha_l \geq 0 \text{ for all } l \in \{1,\ldots,k\} \right\}.
	\end{align}
	The main result of this paper is that we are able to compute $(1-\varepsilon)$-approximate solutions for the problem to maximize a linear function over the cone~$C$ subject to a set of packing constraints under specific assumptions that will be investigated in the following. As we will see in Section~\ref{sec:Applications}, this extended framework is especially useful in the case of network flow problems for which some kind of flow decomposition theorem is known.

	Note that we do \emph{neither} require the set~$S$ to be of polynomial size \emph{nor} assume the set~$S$ or the cone~$C$ to be given explicitly. Instead, as it is common when dealing with implicitly given polyhedra, we only assume the cone to be \emph{well-described}, which implies that it has an encoding length of at least $n+1$ (cf. \citep{GroetschelLovaszSchrijver} for further details). Moreover, we make decisions over $S$ and $C$ via a given \emph{oracle}~$\mathcal{A}$ (to be specified later) that yields information about $S$ based on a given cost vector~$d$. 
	We assume that the running-time~$T_\mathcal{A}$ of each oracle~$\mathcal{A}$ fulfills~$T_\mathcal{A} \in \Omega(n)$ since it would not be able to investigate each component of $d$ or return a vector~$x \in C$ otherwise.


	In the following, let $A \in \mathbb{N}_{\geq 0}^{m \times n}$ denote a constraint matrix with non-negative entries, $b \in \mathbb{N}_{> 0}^m$ a positive right-hand side vector, and $c \in \mathbb{Z}^n$ a cost vector with arbitrary signs. Without loss of generality, we assume that at least one entry in each row and each column of $A$ is positive. Moreover, we define~$N$ to be the number of non-zero entries contained in the matrix~$A$.

	As described above, the problem we want to approximate is given as follows:%
	\begin{subequations}%
	\begin{align}%
		\max\ & c^T x \tag{\ref{eqn:OriginalProblem}a}\\
		\text{s.t.}\ & Ax \leq b, \tag{\ref{eqn:OriginalProblem}b}\\
		& x \in C. \tag{\ref{eqn:OriginalProblem}c}
	\end{align}%
	\end{subequations}%


	Using the definition of the cone~$C$ based on equation~\eqref{eqn:ConeDefinition}, we obtain the following equivalent formulation of the problem~\eqref{eqn:OriginalProblem}:
	\begin{align*}
		\max\ & c^T \sum_{l=1}^k \alpha_l \cdot x^{(l)} \\
		\text{s.t.}\ & A \left( \sum_{l=1}^k \alpha_l \cdot x^{(l)} \right) \leq b, \\
		& \alpha_l \geq 0 && \text{ for all } l \in \{ 1, \ldots, k \}.
	\end{align*}
	In particular, we replaced the original variables~$x$ by the weight vector~$\alpha$ and, in doing so, incorporated the constraints of the cone. As noted above, this formulation might be of exponential size. However, in the following, we will never need to state it explicitly but will derive results based on its implicit structure. By rearranging the objective function and the packing constraints, we obtain the following equivalent formulation of the problem:
	\begin{subequations}\label{eqn:Primal}
	\begin{align}
		\max\ & \sum_{l=1}^k \alpha_l \cdot \left( c^T x^{(l)} \right) \\
		\text{s.t.}\ & \sum_{l=1}^k \alpha_l \cdot \left( A_{i\cdot} x^{(l)} \right) \leq b_i && \text{ for all } i \in \{1,\ldots,m\}, \\
		& \alpha_l \geq 0 && \text{ for all } l \in \{ 1, \ldots, k \}.
	\end{align}
	\end{subequations}
	Clearly, we can neglect vectors~$x^{(l)}$ for which $c^T x^{(l)} \leq 0$ since, without loss of generality, it holds that $\alpha_l = 0$ for each such $l$ in an optimal solution. Hence, for the moment, we restrict\footnotemark\xspace our considerations on vectors~$x^{(l)}$ with $c^T x^{(l)} > 0$ such that the primal problem~\eqref{eqn:Primal} becomes in fact a fractional packing problem (again, possibly of exponential size). The dual formulation of this problem is given as follows:
	\footnotetext{We will see how we can ``filter out'' vectors~$x^{(l)}$ with negative costs in the following sections.}
	\begin{subequations}\label{eqn:Dual}
	\begin{align}
		\min\ & \sum_{i=1}^m y_i \cdot b_i \\
		\text{s.t.}\ & \sum_{i=1}^m y_i \cdot \left( A_{i\cdot} x^{(l)} \right) \geq c^T x^{(l)} && \text{ for all } l \in \{1,\ldots,k\} \text{ with } c^T x^{(l)} > 0, \label{eqn:DualConstraint}\\
		& y_i \geq 0 && \text{ for all } i \in \{ 1, \ldots, m \}.
	\end{align}
	\end{subequations}
	As it was shown in Section~\ref{sec:Pre:GargKoenemann}, we can apply the fractional packing framework of \cite{GargKoenemann} to the original problem~\eqref{eqn:OriginalProblem} \emph{provided} we are able to determine the most violated dual constraint in equation~\eqref{eqn:DualConstraint} efficiently. Hence, given a dual solution~$y > 0$, we need to be able to solve the following subproblem in polynomial time:
	\begin{align*}
		&\min_{\genfrac{}{}{0pt}{}{l \in \{1,\ldots,k\}}{c^T x^{(l)} > 0}} \dfrac{\sum_{i=1}^m y_i \cdot \left( A_{i\cdot} x^{(l)} \right)}{c^T x^{(l)}} %
		= \min_{\genfrac{}{}{0pt}{}{l \in \{1,\ldots,k\}}{c^T x^{(l)} > 0}} \dfrac{\sum_{i=1}^m y_i \cdot \sum_{j=1}^n A_{ij} \cdot x_j^{(l)}}{c^T x^{(l)}} \\
		=&\min_{\genfrac{}{}{0pt}{}{l \in \{1,\ldots,k\}}{c^T x^{(l)} > 0}} \dfrac{\sum_{j=1}^n x_j^{(l)} \cdot \sum_{i=1}^m y_i \cdot A_{ij}}{c^T x^{(l)}}.
	\end{align*}
	With $a_j \colonequals \sum_{i=1}^m y_i \cdot A_{ij}$ for $j \in \{1,\ldots,n\}$ and $a = (a_1,\ldots,a_n)^T$, this subproblem reduces to
	\begin{align}
		\min_{\genfrac{}{}{0pt}{}{l \in \{1,\ldots,k\}}{c^T x^{(l)} > 0}} \dfrac{a^T x^{(l)}}{c^T x^{(l)}}. \label{eqn:MostViolatedSubproblem}
	\end{align}
	Note that the vector~$a$ depends on $y$ and, thus, changes throughout the course of the procedure of \citet{GargKoenemann}. However, it always holds that $a_j > 0$ for each $j \in \{1,\ldots,n\}$ since $y_i > 0$ for each $i \in \{1,\ldots,m\}$ throughout the procedure and since the matrix~$A$ has at least one positive and no negative entry in each row as assumed above. Since $x^{(l)} \neq 0$ and $x^{(l)} \in \mathbb{R}^n_{\geq 0}$ for each $l \in \{1,\ldots,k\}$, this also yields that $a^T x^{(l)} > 0$, so the minimum in equation~\eqref{eqn:MostViolatedSubproblem} is always strictly positive.

	\begin{observation}\label{obs:EverythingPositive}
		It always holds that $a_j > 0$ for each $j \in \{1,\ldots,n\}$. Moreover, $a^T x^{(l)} > 0$ for each $x^{(l)} \in S$.
	\end{observation}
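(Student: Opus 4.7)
The plan is to verify both assertions essentially by chasing the definition of $a_j$ and the assumed positivity of the dual iterate $y$, together with the structural assumptions on $A$ and on the generators $x^{(l)}$.

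For the first claim, I would first remind the reader that, in the Garg--Koenemann procedure as recalled in Section~\ref{sec:Pre:GargKoenemann}, the dual iterate is initialized with $y_i = \delta/b_i > 0$ (since $\delta > 0$ and $b_i > 0$) and is updated in each iteration by multiplying each $y_i$ by a factor of the form $1 + \varepsilon' \cdot \tfrac{\nu}{b_i/A_{ij}} \geq 1 > 0$, so positivity of $y$ is preserved throughout the procedure. Combined with the standing assumption that $A$ has non-negative entries and at least one strictly positive entry in every column, each sum
\begin{equation*}
a_j \;=\; \sum_{i=1}^m y_i \cdot A_{ij}
\end{equation*}
is a sum of non-negative terms containing at least one strictly positive summand $y_{i^*} A_{i^* j} > 0$, where $i^*$ indexes a positive entry in column $j$. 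This yields $a_j > 0$.

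For the second claim, I would combine the first claim with the hypotheses $x^{(l)} \in \mathbb{R}^n_{\geq 0}$ and $x^{(l)} \neq 0$ imposed on the generators of the cone~$C$ in equation~\eqref{eqn:ConeDefinition}. Non-zeroness together with non-negativity forces at least one component $x^{(l)}_{j^*} > 0$, and since $a_{j^*} > 0$ by the first part while all other products $a_j x^{(l)}_j$ are non-negative, the inner product
\begin{equation*}
a^T x^{(l)} \;=\; \sum_{j=1}^n a_j \cdot x^{(l)}_j \;\geq\; a_{j^*} \cdot x^{(l)}_{j^*} \;>\; 0
\end{equation*}
is strictly positive, which finishes the verification.

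There is no real obstacle here; the statement is a bookkeeping observation that records why the quotient in the subproblem~\eqref{eqn:MostViolatedSubproblem} is well-defined and strictly positive throughout the algorithm. The only subtlety worth flagging explicitly is that positivity of $y$ is a \emph{dynamic} property that must be argued to persist across all iterations of Garg--Koenemann's update rule, rather than a property of the initial solution alone.
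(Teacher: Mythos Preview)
Your argument is correct and matches the paper's own justification, which appears in the paragraph immediately preceding the observation: positivity of $y$ throughout the procedure together with the column assumption on $A$ gives $a_j>0$, and then $x^{(l)}\in\mathbb{R}^n_{\geq 0}\setminus\{0\}$ gives $a^T x^{(l)}>0$. If anything, you are slightly more precise than the paper, which refers to a positive entry ``in each row'' where the relevant hypothesis for $a_j>0$ is the column assumption (both are part of the standing assumptions), and you make the persistence of $y>0$ across iterations explicit.
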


	Clearly, if the vectors in $S$ are given explicitly, we immediately obtain an FPTAS for the original problem~\eqref{eqn:OriginalProblem} using the arguments given in Section~\ref{sec:Pre:GargKoenemann}. In the following, we discuss the more elaborate case that we can access the set~$S$ and the cone~$C$ only via given oracles. Throughout this paper, we investigate three kinds of oracles with decreasing strength. The most powerful oracle to be considered can be defined as follows:

	\begin{definition}[Minimizing Oracle]\label{def:MinimizingOracle}
		For a given vector~$d \in \mathbb{R}^n$, a \emph{minimizing oracle} for the set $S$ returns a vector~$x^{(l^*)} \in S$ that minimizes $d^T x^{(l)}$ among all vectors~$x^{(l)} \in S$.
	\end{definition}

	Clearly, the notion of minimizing oracles requires very powerful algorithms. For example, if $S$ is the set of unit-flows on simple cycles in a given graph~$G$, a minimizing oracle would need to be able to determine a most negative simple cycle, which is $\NP$-complete in general (see Section~\ref{sec:Applications:BCMCFP}). In many cases, it suffices to consider a much weaker type of oracle given as follows:

	\begin{definition}[Sign Oracle]\label{def:SignOracle}
		For a given vector~$d \in \mathbb{R}^n$, a \emph{sign oracle} for the set $S$ returns a vector~$x^{(l)} \in S$ with\footnotemark\xspace $\sgn d^T x^{(l)} = \sgn d^T x^{(l^*)}$, where $x^{(l^*)}$ minimizes $d^T x^{(i)}$ among all vectors in $S$.
	\end{definition}
	\footnotetext{The \emph{sign function}~$\sgn \colon \mathbb{R} \mapsto \{-1,0,1\}$ returns $-1$, $0$, or $1$ depending on whether the argument is negative, zero, or positive, respectively.}

	Rather than determining a vector in $S$ with minimum cost, a sign oracle only returns \emph{any} vector whose cost have the same sign as a minimum-cost vector, which may be much easier to find. In the example above, we can easily find a cycle with the same costs as a most negative cycle by computing a minimum mean cycle in $\OO(nm)$~time (cf. \citep{KarpMinimumMeanCycle} and Section~\ref{sec:Applications:BCMCFP}). An even simpler kind of oracle is given as follows:

	\begin{definition}[Separation Oracle]\label{def:SeparationOracle}
		For a given vector~$d \in \mathbb{R}^n$, a \emph{separation oracle} for the set~$S$ either states that $d^T x^{(i)} \geq 0$ for all vectors~$x^{(i)} \in S$ or returns a \emph{certificate}~$x^{(l)} \in S$ that fulfills $d^T x^{(l)} < 0$.
	\end{definition}

	Clearly, the notion of separation oracles yields the least powerful yet most natural definition of an oracle. The name ``separation oracle'' is based on the fact that such an oracle can be seen as a traditional separation oracle for the \emph{dual cone}~$C^* \colonequals \{ w \in \mathbb{R}^n: w^T x \geq 0 \text{ for all } x \in C \}$ of the cone~$C$ (cf. \citep{GroetschelLovaszSchrijver}).

	Note that each minimizing oracle also induces a sign oracle and that each sign oracle induces a separation oracle, so the considered oracles have in fact decreasing strength. In particular, each approximation algorithm that is based on a sign oracle (separation oracle) is also valid for the case of a minimizing oracle (sign oracle).

	For the special case of uniform costs~$c^T x^{(l)}$ for all vectors~$x^{(l)} \in S$, we get the first approximation result based on the procedure of \citet{GargKoenemann}:

	\begin{theorem}\label{thm:FPTASMinimizing}
		Suppose that $c^T x^{(l)} = \hat{c}$ for all $x^{(l)} \in S$ and some constant~$\hat{c} > 0$. Given a minimizing oracle $\mathcal{A}$ for $S$ running in $T_\mathcal{A}$~time, a $(1 - \varepsilon)$-approximate solution for the problem~\eqref{eqn:OriginalProblem} can be computed in $\OO\left(\frac{1}{\varepsilon^2} \cdot m\log m \cdot  (N + T_\mathcal{A}) \right)$~time.
	\end{theorem}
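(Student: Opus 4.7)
The plan is to apply the Garg--Koenemann procedure of Section~\ref{sec:Pre:GargKoenemann} directly to the fractional packing formulation~\eqref{eqn:Primal}, using the minimizing oracle $\mathcal{A}$ as the subroutine that supplies the most violated dual constraint. Because $c^T x^{(l)} = \hat c > 0$ holds uniformly over all $x^{(l)} \in S$, no vector has to be filtered out, so~\eqref{eqn:Primal} is a genuine (exponential-sized) fractional packing instance whose coefficients $A_{i\cdot} x^{(l)}$ are non-negative, whose right-hand sides $b_i$ are positive, and whose objective coefficients $\hat c$ are positive. Consequently, all hypotheses of the framework recalled in Section~\ref{sec:Pre:GargKoenemann} are met.

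The key reduction is the observation that under uniform costs the subproblem~\eqref{eqn:MostViolatedSubproblem} collapses to a pure linear minimization: for the current dual iterate $y$ and the associated vector $a_j = \sum_i y_i A_{ij}$, one has
\[
\min_{l \,:\, c^T x^{(l)} > 0} \frac{a^T x^{(l)}}{c^T x^{(l)}} \;=\; \frac{1}{\hat c}\cdot \min_{l \in \{1,\ldots,k\}} a^T x^{(l)},
\]
so calling $\mathcal{A}$ with $d \colonequals a$ returns, in $T_{\mathcal{A}}$ time, a vector $x^{(l^*)}$ that realises the most violated dual constraint exactly. Observation~\ref{obs:EverythingPositive} guarantees that $a > 0$ throughout, so every call to $\mathcal{A}$ is on a well-defined input and the returned value $a^T x^{(l^*)}$ is strictly positive, which keeps the Garg--Koenemann update step well-defined.

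Next I would account for the per-iteration work. Forming $a$ from $y$ via $a_j = \sum_i y_i A_{ij}$ costs $\OO(N)$ since it touches each non-zero entry of $A$ once. The oracle call costs $T_{\mathcal{A}}$. The primal update increments $\alpha_{l^*}$ by $\nu = \min_{i\,:\,A_{i\cdot} x^{(l^*)} > 0} b_i / (A_{i\cdot} x^{(l^*)})$ and the dual update multiplies each $y_i$ by $(1 + \varepsilon' \nu A_{i\cdot} x^{(l^*)} / b_i)$; computing all the values $A_{i\cdot} x^{(l^*)}$ needed for both steps again traverses the non-zeros of $A$ once, contributing $\OO(N)$. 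Summing and multiplying by the $\OO(\tfrac{1}{\varepsilon^2} m \log m)$ iteration bound of Section~\ref{sec:Pre:GargKoenemann} gives the stated $\OO\!\left(\tfrac{1}{\varepsilon^2} m \log m \cdot (N + T_{\mathcal{A}})\right)$ running time. The final rescaling by $\log_{1+\varepsilon'}\!\tfrac{1+\varepsilon'}{\delta}$ yields a $(1-\varepsilon)$-approximate solution $\alpha$, from which $x = \sum_l \alpha_l x^{(l)} \in C$ is recovered.

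There is no real obstacle here: the theorem is essentially the cleanest case of the framework, where uniform costs turn the ratio minimization into a linear minimization that a minimizing oracle handles in a single call. The only points requiring care are verifying that the positivity conditions inherited from $A$, $b$, and Observation~\ref{obs:EverythingPositive} make both the Garg--Koenemann invariants and the oracle inputs valid, and confirming that each iteration truly costs $\OO(N + T_{\mathcal{A}})$ rather than, say, $\OO(mn + T_{\mathcal{A}})$, which follows from the sparsity-aware way of computing $a$ and $A x^{(l^*)}$.
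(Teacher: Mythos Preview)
Your proposal is correct and follows essentially the same approach as the paper: reduce the ratio subproblem~\eqref{eqn:MostViolatedSubproblem} to a linear minimization via the uniform-cost assumption, invoke the minimizing oracle on (a scalar multiple of) the vector~$a$, and appeal to the Garg--Koenemann iteration bound from Section~\ref{sec:Pre:GargKoenemann}. You even supply a bit more detail than the paper on why the primal/dual update step costs $\OO(N)$, which the paper leaves implicit.
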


	\begin{proof}
		Since $c^T x^{(l)} = \hat{c}$ for each $x^{(l)} \in S$, the subproblem given in equation~\eqref{eqn:MostViolatedSubproblem} reduces to the problem of finding a vector~$x^{(l)}$ with minimum cost~$\left(\frac{1}{\hat{c}} \cdot a\right)^T x^{(l)}$ among all vectors in $S$. Using the minimizing oracle, we can compute a minimizer for \eqref{eqn:MostViolatedSubproblem} in $\OO(T_\mathcal{A})$~time based on the cost vector~$d \colonequals \frac{1}{\hat{c}} \cdot a$. Note that this cost vector can be built in $\OO(N)$~time as each entry~$a_j$ of $a$ is defined to be $\sum_{i=1}^m y_i \cdot A_{ij}$, where each $y_i$ stems from the framework of \citet{GargKoenemann}. Consequently, we need look at each of the $N$ entries of $A$ once in order to build $d$. Hence, we are able to determine a most violated dual constraint of \eqref{eqn:Dual} in $\OO(N + T_\mathcal{A})$~time, so the claim follows immediately from the arguments outlined in Section~\ref{sec:Pre:GargKoenemann}.
	\end{proof}
	Note that the vector~$d$ that is constructed in the above procedure is always positive in each component according to Observation~\ref{obs:EverythingPositive}. As a consequence, if for example we use the vector~$d$ to denote the length of edges in a graph, we can use \citets{Dijkstra} algorithm to compute a shortest path. This will be used in Section~\ref{sec:Applications}.

	In the following sections, we will focus on the more general cases in which the costs are no longer uniform and in which we may only have access to the cone via weaker types of oracles.

	\section{General Algorithm}
	\label{sec:GeneralAlgorithm}

	Throughout this section, we assume that there is a separation oracle~$\mathcal{A}$ for $S$ running in $T_\mathcal{A}$~time. Hence, the presented results are valid for the case of minimizing oracles and sign oracles as well. In the subsequent section, we will see where the different strengths of the oracles come into play.

	The procedure of the upcoming algorithm is based on an idea introduced by \citet{FleischerMulticommodity}, which was originally developed for the maximum multicommodity flow problem: For $\lambda^*$ to denote the optimal value of the most violated dual constraint in equation~\eqref{eqn:MostViolatedSubproblem}, we let $\u{\lambda}$ denote a positive lower bound for $\lambda^*$. We will show in Section~\ref{sec:LowerBound} how we can find a good initial value for this lower bound efficiently. In each iteration of the procedure of \citet{GargKoenemann} as described in Section~\ref{sec:Pre:GargKoenemann}, we need to determine an approximately most violated dual constraint corresponding to some vector~$x^{(j)} \in S$ fulfilling
	\begin{align*}
		\frac{a^T x^{(j)}}{c^T x^{(j)}} \leq (1 + \varepsilon) \cdot \lambda^* = (1 + \varepsilon) \cdot \min_{\genfrac{}{}{0pt}{}{l \in \{1,\ldots,k\}}{c^T x^{(l)} > 0}} \dfrac{a^T x^{(l)}}{c^T x^{(l)}}.
	\end{align*}
	For $\lambda \in \mathbb{R}$, let $d(\lambda) \colonequals a - \lambda c$ and $D(\lambda) \colonequals \min\{ d(\lambda)^T x^{(l)} : l \in \{1,\ldots,k\} \text{ with } c^T x^{(l)} > 0 \}$. Similar to the minimum ratio cycle problem \citep{LawlerCombinatorialOptimization,MegiddoCombinatorialOptimization,MegiddoParallel}, we get the following characterization of the relation between the sign of $D(\lambda)$ and the sign of $\lambda^* - \lambda$:

	\begin{lemma}\label{lem:DecisionParametric}
		For some given value of $\lambda \in \mathbb{R}$, it holds that $\sgn(D(\lambda)) = \sgn(\lambda^* - \lambda)$.
	\end{lemma}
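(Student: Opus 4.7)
The plan is to rewrite $D(\lambda)$ in a form that makes the sign dependence on $\lambda$ transparent, and then do a direct case analysis on whether $\lambda$ lies below, at, or above $\lambda^*$. Concretely, for each $l$ with $c^T x^{(l)} > 0$ the definition $d(\lambda) = a - \lambda c$ gives
\begin{align*}
    d(\lambda)^T x^{(l)} \;=\; a^T x^{(l)} - \lambda \cdot c^T x^{(l)} \;=\; c^T x^{(l)} \cdot \left( \frac{a^T x^{(l)}}{c^T x^{(l)}} - \lambda \right),
\end{align*}
and since the factor $c^T x^{(l)}$ is strictly positive, the sign of $d(\lambda)^T x^{(l)}$ coincides with the sign of $\frac{a^T x^{(l)}}{c^T x^{(l)}} - \lambda$.

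From here I would split into the three natural cases. If $\lambda < \lambda^*$, then by definition of $\lambda^*$ each ratio $\frac{a^T x^{(l)}}{c^T x^{(l)}}$ is at least $\lambda^* > \lambda$, so every term $d(\lambda)^T x^{(l)}$ is strictly positive, giving $D(\lambda) > 0$. If $\lambda = \lambda^*$, then the minimizing index $l^*$ realizing $\lambda^*$ satisfies $d(\lambda)^T x^{(l^*)} = 0$ while all other terms are non-negative, so $D(\lambda) = 0$. If $\lambda > \lambda^*$, then the same minimizer $l^*$ yields $d(\lambda)^T x^{(l^*)} = c^T x^{(l^*)} (\lambda^* - \lambda) < 0$, and hence $D(\lambda) \leq d(\lambda)^T x^{(l^*)} < 0$. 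In each case $\sgn(D(\lambda))$ matches $\sgn(\lambda^* - \lambda)$.

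There is essentially no obstacle here: the only subtlety is being careful that the minimum defining $\lambda^*$ is indeed attained by some index $l^*$ (which follows because $S$ is a finite set of $k$ vectors and the restriction $c^T x^{(l)} > 0$ is invoked in both $\lambda^*$ and $D(\lambda)$, using the same index set), and that the positivity of $c^T x^{(l)}$ on this index set is what allows us to transfer the sign between the product form and the ratio form without flipping it. Once these are noted, the lemma follows from the case analysis above.
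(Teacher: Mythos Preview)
Your proof is correct and follows essentially the same approach as the paper: both arguments hinge on the identity $d(\lambda)^T x^{(l)} = c^T x^{(l)}\bigl(\tfrac{a^T x^{(l)}}{c^T x^{(l)}} - \lambda\bigr)$ together with positivity of $c^T x^{(l)}$, followed by a three-way case split. The only cosmetic difference is that the paper splits on the sign of $D(\lambda)$ and invokes continuity for the equality case, whereas you split on the position of $\lambda$ relative to $\lambda^*$ and handle equality directly via the attained minimizer~$l^*$.
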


	\begin{proof}
		Let $L \colonequals \{ l \in \{1,\ldots,k\} : c^T x^{(l)} > 0 \}$. First, consider the case that $D(\lambda) > 0$. The claim follows by simple arguments:
		\begin{align*}
			D(\lambda) > 0 &\Longleftrightarrow d(\lambda)^T x^{(l)} > 0 &&\text{ for all } l \in L \\
			&\Longleftrightarrow (a - \lambda c)^T x^{(l)} > 0 &&\text{ for all } l \in L \\
			&\Longleftrightarrow \frac{a^T x^{(l)}}{c^T x^{(l)}} > \lambda &&\text{ for all } l \in L \\
			&\Longleftrightarrow \lambda^* > \lambda.
		\end{align*}
		Conversely, if $D(\lambda) < 0$, we get the following equivalences by similar arguments:
		\begin{align*}
			D(\lambda) < 0 &\Longleftrightarrow d(\lambda)^T x^{(l)} < 0 &&\text{ for some } l \in L \\
			&\Longleftrightarrow (a - \lambda c)^T x^{(l)} < 0 &&\text{ for some } l \in L \\
			&\Longleftrightarrow \frac{a^T x^{(l)}}{c^T x^{(l)}} < \lambda &&\text{ for some } l \in L \\
			&\Longleftrightarrow \lambda^* < \lambda.
		\end{align*}
		Finally, in the remaining case $D(\lambda) = 0$, it follows by continuity that $\lambda^* = \lambda$, which shows the claim.
	\end{proof}

	Lemma~\ref{lem:DecisionParametric} implies that $\lambda^*$ is the maximum value of $\lambda$ such that $D(\lambda) \geq 0$, i.e., such that $d(\lambda)^T x^{(l)} \geq 0$ for each $x^{(l)} \in S$ with $c^T x^{(l)} > 0$. In each iteration of our general procedure, we call the given separation oracle~$\mathcal{A}$ with the vector~$d((1+\varepsilon)\u{\lambda})$. We distinguish between the two possible outcomes of one such call:
	\begin{description}
		\item[\textbf{Case 1:}] The oracle returns some certificate~$x^{(l)} \in S$ with $d((1+\varepsilon)\u{\lambda})^T x^{(l)} < 0$. In this case, we get that
			\begin{align*}
										&& \left( a - (1 + \varepsilon) \cdot \u{\lambda} \cdot c \right)^T x^{(l)} &< 0 \\
				\Longleftrightarrow 	&& a^T x^{(l)} &< (1 + \varepsilon) \cdot \u{\lambda} \cdot c^T x^{(l)} \\
				\Longrightarrow 		&& \frac{a^T x^{(l)}}{c^T x^{(l)}} &< (1 + \varepsilon) \cdot \u{\lambda} \\
				\Longrightarrow 		&& \frac{a^T x^{(l)}}{c^T x^{(l)}} &< (1 + \varepsilon) \cdot \lambda^*.
			\end{align*}
			The third inequality follows from the fact that $a^T x^{(l)} > 0$ (cf. Observation~\ref{obs:EverythingPositive}) and that $\u{\lambda} > 0$ such that it also holds that $c^T x^{(l)} > 0$. The returned vector~$x^{(l)}$ yields an approximately most violated dual constraint. We use this dual constraint and continue the procedure of \citet{GargKoenemann}. Note that, during an iteration of the procedure, it holds that $c^T x^{(j)}$ remains constant for each vector~$x^{(j)} \in S$ (since it does not depend on the dual solution~$y$) and that $a^T x^{(l)}$ does not decrease (since both the entries in $x^{(l)}$ and the entries in $A$ are non-negative). Hence, $\u{\lambda}$ continues to be a lower bound for the (possibly increased) new value~$\lambda^*$ of \eqref{eqn:MostViolatedSubproblem}.
		\item[\textbf{Case 2:}] The oracle states that all vectors~$x^{(l)}$ fulfill $d((1+\varepsilon)\u{\lambda})^T x^{(l)} \geq 0$. It now holds that
			\begin{align*}
										&& \left( a - (1 + \varepsilon) \cdot \u{\lambda} \cdot c \right)^T x^{(l)} &\geq 0 && \forall x^{(l)} \in S \\
				\Longleftrightarrow 	&& a^T x^{(l)} &\geq (1 + \varepsilon) \cdot \u{\lambda} \cdot c^T x^{(l)}  && \forall x^{(l)} \in S \\
				\Longleftrightarrow 	&& \frac{a^T x^{(l)}}{c^T x^{(l)}} &\geq (1 + \varepsilon) \cdot \u{\lambda}  && \forall x^{(l)} \in S \text{ with } c^T x^{(l)} > 0 \\
				\Longleftrightarrow 	&& \lambda^* &\geq (1 + \varepsilon) \cdot \u{\lambda}.
			\end{align*}
			In this case, we can update the lower bound~$\u{\lambda}$ to $(1 + \varepsilon) \cdot \u{\lambda}$ and continue.
	\end{description}

	Hence, in each iteration of the algorithm, we either proceed in the procedure of \citet{GargKoenemann} or we increase the lower bound by a factor of $1 + \varepsilon$. Again, we want to stress that $\u{\lambda}$ continues to be a lower bound for $\lambda^*$ after an iteration of the above algorithm. The following theorem shows that this yields an efficient approximation algorithm for the problem~$\eqref{eqn:OriginalProblem}$ provided we are given a sufficiently good initial estimate for $\lambda^*$:

	\begin{theorem}\label{thm:FPTASforGivenInitialValue}
		Given a separation oracle~$\mathcal{A}$ for $S$ running in $T_\mathcal{A}$~time and given an initial lower bound~$\u{\lambda}$ for the initial value of $\lambda^*$ fulfilling $\u{\lambda} \leq \lambda^* \leq m^{\frac{1}{\varepsilon} m} \cdot \u{\lambda}$, a $(1 - \varepsilon)$-approximate solution for problem~\eqref{eqn:OriginalProblem} can be determined in $\OO\left(\frac{1}{\varepsilon^2} \cdot m \log m \cdot (N + T_\mathcal{A}) \right)$~time.
	\end{theorem}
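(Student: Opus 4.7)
The plan is to combine the two-case analysis carried out just before the theorem with the classical iteration count of \citet{GargKoenemann} (strengthened by Fleischer's tolerance for $(1+\varepsilon)$-approximately most violated dual constraints), and to bound the number of $\u{\lambda}$-updates by controlling how much $\lambda^*$ can grow during the run. I would use $\varepsilon' = \Theta(\varepsilon)$ as the parameter fed to the Garg--Könemann scheme (and in the probing factor $(1+\varepsilon')\u{\lambda}$) so that the final guarantee is exactly $(1-\varepsilon)$; this is a routine constant adjustment that does not change the asymptotics.

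Each outer iteration has the same per-step cost: build $a$ from the current dual variables $y$ in $\OO(N)$ time (touching every non-zero entry of $A$ once), form $d((1+\varepsilon)\u{\lambda}) = a-(1+\varepsilon)\u{\lambda}c$, and invoke $\mathcal{A}$ in $T_\mathcal{A}$ time. In Case~1, the derivation preceding the theorem shows that the returned $x^{(l)}$ satisfies $a^T x^{(l)}/c^T x^{(l)} < (1+\varepsilon)\lambda^*$ and, thanks to Observation~\ref{obs:EverythingPositive}, automatically has $c^T x^{(l)} > 0$. Hence it serves as a $(1+\varepsilon)$-approximately most violated dual constraint in the sense of \citet{FleischerMulticommodity}, so the Garg--Könemann update may proceed; the observation that $c^T x^{(j)}$ is independent of $y$ and that $a^T x^{(j)}$ is monotone non-decreasing in $y$ ensures that $\u{\lambda}$ remains a valid lower bound afterwards. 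Because every Case~1 iteration coincides with one Garg--Könemann iteration, their total number is $\OO(\tfrac{1}{\varepsilon^2}\, m\log m)$.

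The heart of the argument is bounding the number of Case~2 iterations. Each one multiplies $\u{\lambda}$ by $1+\varepsilon$ while maintaining $\u{\lambda} \leq \lambda^*$, so their count is at most
\begin{align*}
\log_{1+\varepsilon}\!\left( \frac{\lambda^*_{\max}}{\u{\lambda}_{\text{init}}} \right),
\end{align*}
where $\lambda^*_{\max}$ is the largest value of $\lambda^*$ attained during the run. Since $\lambda^*$ is non-decreasing by the monotonicity noted above, it suffices to bound the final value. I would inspect the Garg--Könemann dynamics to argue that each $y_i$, which starts at $\delta/b_i$ with $\delta^{-1} = \OO(((1+\varepsilon')m)^{1/\varepsilon'})$, stays below $(1+\varepsilon')/b_i$ at termination: the stopping condition $\sum_i b_i y_i \geq 1$ is reached for the first time, and in the step that triggers it each $b_i y_i$ is inflated by a factor of at most $1+\varepsilon'$. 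Hence every $y_i$, and with it every $a_j = \sum_i y_i A_{ij}$, grows by a factor of at most $m^{\OO(1/\varepsilon)}$, giving $\lambda^*_{\max} \leq m^{\OO(1/\varepsilon)}\lambda^*_{\text{init}} \leq m^{\OO(m/\varepsilon)}\, \u{\lambda}_{\text{init}}$ by the hypothesis on the initial bound. Using $\log(1+\varepsilon) = \Theta(\varepsilon)$, this produces $\OO(\tfrac{1}{\varepsilon^2}\, m\log m)$ Case~2 iterations.

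Summing both kinds of iterations yields $\OO(\tfrac{1}{\varepsilon^2}\, m\log m)$ outer iterations, each of cost $\OO(N + T_\mathcal{A})$; the Garg--Könemann bookkeeping (updating the relevant coordinates of $y$, recording the primal update on $\alpha_l$, and the final rescaling) fits within the same budget since it is only performed during Case~1 iterations and touches $\OO(N)$ entries. The $(1-\varepsilon)$-approximation guarantee then follows directly from the Fleischer-refined Garg--Könemann analysis. The main obstacle is the $\lambda^*_{\max}$-bound: it requires a careful look at the termination of the Garg--Könemann procedure to verify that the \emph{dynamic} quantity $\lambda^*$ stays within a factor $m^{\OO(1/\varepsilon)}$ of its initial value, a factor swallowed by the hypothesised initial gap $m^{m/\varepsilon}$ and thereby keeping the Case~2 count at the same order as the Case~1 count.
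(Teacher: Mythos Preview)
Your proposal is correct and follows essentially the same route as the paper: you bound the Case~1 iterations by the Garg--K\"onemann/Fleischer count, and you bound the Case~2 iterations by tracking the total multiplicative growth of the dual variables $y_i$ from $\delta/b_i$ to at most $(1+\varepsilon')/b_i$, which caps the growth of $\lambda^*$ by a factor $(1+\varepsilon')/\delta = m^{\OO(1/\varepsilon)}$ and, combined with the hypothesised initial gap $m^{m/\varepsilon}$, yields the same $\OO(\tfrac{1}{\varepsilon^2}\, m\log m)$ bound. The paper carries out exactly this computation with explicit constants rather than $\OO(\cdot)$ notation, but the argument is identical.
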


	\begin{proof}
		The correctness of the procedure follows from the arguments outlined in Section~\ref{sec:Pre:GargKoenemann}, the preceding discussion, and the fact that the initial value of $\u{\lambda}$ is a valid lower bound for $\lambda^*$.

		In each step of the algorithm, we evaluate the given separation oracle and -- based on its result -- either perform one iteration of the procedure of \citet{GargKoenemann} and \citet{FleischerMulticommodity} or update the lower bound~$\u{\lambda}$. As noted in Section~\ref{sec:Pre:GargKoenemann}, the former case occurs up to $\OO(\frac{1}{\varepsilon^2} \cdot m \log m)$~times. In order to determine the number of updates to $\u{\lambda}$, let $(\lambda^*)^{(k)}$, $\u{\lambda}^{(k)}$, $y_i^{(k)}$ denote the values of the corresponding variables after the $k$-th iteration of the overall algorithm and let $\tau$ denote the number of iterations until the algorithm stops. Note that the procedure stops as soon as $\sum_{i=1}^m b_i \cdot y^{(k)}_i \geq 1$. Hence, after the $(\tau-1)$-th iteration, it holds that $y^{(\tau-1)}_i < \frac{1}{b_i}$ for each $i \in \{1,\ldots,m\}$. Since each variable~$y^{(\tau-1)}_i$ will be increased by a factor of at most $1 + \varepsilon$ in the final iteration, it holds that $y^{(\tau)}_i < (1 + \varepsilon) \cdot \frac{1}{b_i}$. Since the initial value of each variable~$y_i$ was set to $y^{(0)}_i \colonequals \frac{\delta}{b_i}$, every dual variable increases by a factor of at most $\frac{1 + \varepsilon}{\delta}$ during the execution of the algorithm, so $y^{(\tau)}_i \leq \frac{1 + \varepsilon}{\delta} \cdot y^{(0)}$. However, this also implies that $(\lambda^*)^{(\tau)} \leq \frac{1 + \varepsilon}{\delta} \cdot (\lambda^*)^{(0)}$. Since the lower bound~$\u{\lambda}$ remains to be a lower bound after every step of the algorithm as discussed above, it holds that
		\[
			\u{\lambda}^{(\tau)} \leq (\lambda^*)^{(\tau)} \leq \frac{1 + \varepsilon}{\delta} \cdot (\lambda^*)^{(0)} \leq \frac{1 + \varepsilon}{\delta} \cdot m^{\frac{1}{\varepsilon} m} \cdot \u{\lambda}^{(0)},
		\]
		where the third inequality follows from the requirement that $(\lambda^*)^{(0)} \leq m^{\frac{1}{\varepsilon} m} \cdot \u{\lambda}^{(0)}$. Since $\u{\lambda}$ is increased by a factor of $1 + \varepsilon$ in each update step, we get that the number of such steps is bounded by
		\begin{align*}
			\log_{1 + \varepsilon} \dfrac{\u{\lambda}^{(\tau)}}{\u{\lambda}^{(0)}}
			&\leq \log_{1 + \varepsilon} \left( \dfrac{1 + \varepsilon}{\delta} \cdot m^{\frac{1}{\varepsilon} m} \right)
			= \log_{1 + \varepsilon} \dfrac{1 + \varepsilon}{\delta} + \log_{1 + \varepsilon} m^{\frac{1}{\varepsilon} m} \\
			&= \log_{1 + \varepsilon} \dfrac{1 + \varepsilon}{\frac{1+\varepsilon}{((1+\varepsilon)m)^{\frac{1}{\varepsilon}}}} + \frac{1}{\varepsilon} \cdot m \log_{1 + \varepsilon} m \\
			&= \log_{1 + \varepsilon} ((1 + \varepsilon) m)^\frac{1}{\varepsilon}  + \frac{1}{\varepsilon} \cdot m \log_{1 + \varepsilon} m \\
			&= \frac{1}{\varepsilon} \cdot (1 + \log_{1 + \varepsilon} m)  + \frac{1}{\varepsilon} \cdot m \log_{1 + \varepsilon} m \\
			&\in \OO\left( \frac{1}{\varepsilon} \cdot m \log_{1+\varepsilon} m \right)
			= \OO\left( \frac{1}{\varepsilon} \cdot m \cdot \frac{\log m}{\log (1+\varepsilon)} \right)
			= \OO\left( \frac{1}{\varepsilon^2} \cdot m \log m \right)
		\end{align*}
		and, thus, matches the number of iterations of the procedure of \citet{GargKoenemann}. The claim then follows by the fact that, in each step of the algorithm, we need $\OO(N)$~time to compute the entries of the vector~$d((1+\varepsilon)\u{\lambda})$ and $T_\mathcal{A}$~time to evaluate the oracle.
	\end{proof}

	Note that the allowed deviation of the initial lower bound~$\u{\lambda}$ to $\lambda^*$ in Theorem~\ref{thm:FPTASforGivenInitialValue} is chosen in a way such that the number of update steps to $\u{\lambda}$ does not dominate the $\OO(\frac{1}{\varepsilon^2} \cdot m \log m)$~steps of the overall procedure.

	\section{Determining a lower bound}
	\label{sec:LowerBound}

	The proof of Theorem~\ref{thm:FPTASforGivenInitialValue} shows that the strongly polynomial number of oracle calls depends on the assumption that the initial value for the lower bound~$\u{\lambda}$ is not ``too far away'' from the real value~$\lambda^*$ of the most violated dual constraint. In this section, we present a weakly polynomial-time and a strongly polynomial-time approach to find such a sufficiently good initial value.

	\subsection{Weakly Polynomial-Time Approach for Separation Oracles}
	\label{sec:LowerBoundWeakly}

	We start by providing a general approach that is valid for all three types of oracles. The running time will depend (logarithmically) on the largest number given in the input, denoted by $M \colonequals \max\{ (\max_i b_i), (\max_j c_j), (\max_{i,j} A_{i,j}), n, m\} \in \mathbb{N}$.

	\begin{lemma}\label{lem:InitialBoundWeakly}
		Suppose that we are given a separation oracle~$\mathcal{A}$ running in $T_\mathcal{A}$~time. An initial lower bound~$\u{\lambda}$ for $\lambda^*$ fulfilling $\u{\lambda} \leq \lambda^* \leq m^{\frac{1}{\varepsilon} m} \cdot \u{\lambda}$ can be determined in weakly polynomial time $\OO((T_\mathcal{A} + N) \cdot (\log\log M - (\log \frac{1}{\varepsilon} + \log m + \log \log m)))$.
	\end{lemma}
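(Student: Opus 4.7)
The plan is to perform a \emph{geometric binary search} (cf.~\citet{HassinGeometricBinarySearch}) for $\lambda^*$, using the separation oracle~$\mathcal{A}$ as the decision subroutine. The key observation is that, for any candidate value $\lambda > 0$, evaluating~$\mathcal{A}$ on the cost vector~$d(\lambda) = a - \lambda c$ (with the initial $a_j = \sum_i \tfrac{\delta}{b_i} A_{ij}$) yields a sharp dichotomy between ``$\lambda^* < \lambda$'' and ``$\lambda^* \geq \lambda$'': if $\mathcal{A}$ returns a certificate~$x^{(l)} \in S$ with $d(\lambda)^T x^{(l)} < 0$, then Observation~\ref{obs:EverythingPositive} forces $c^T x^{(l)} > 0$ and hence $\tfrac{a^T x^{(l)}}{c^T x^{(l)}} < \lambda$, so $\lambda^* < \lambda$ by Lemma~\ref{lem:DecisionParametric}; otherwise $d(\lambda)^T x^{(l)} \geq 0$ holds for all $x^{(l)} \in S$, in particular for all those with $c^T x^{(l)} > 0$, which gives $\lambda^* \geq \lambda$.

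In a first step I would derive explicit values $\lambda_{\min}, \lambda_{\max} > 0$ satisfying $\lambda_{\min} \leq \lambda^* \leq \lambda_{\max}$ with $\log(\lambda_{\max}/\lambda_{\min}) = \mathcal{O}(\log M)$. Since each column of $A$ contains a positive integer entry and all entries of $A$ and $b$ are bounded by $M$, the initial vector~$a$ satisfies $a_j \in [\delta/M,\, \delta m M]$ for every $j$, and analogous $\mathrm{poly}(M)$ bounds on $c^T x^{(l)}$ and $a^T x^{(l)}$ follow from the integrality of $c$ combined with the assumption that the cone~$C$ is well-described. The ratio~$\tfrac{a^T x^{(l)}}{c^T x^{(l)}}$ is thus trapped inside an interval whose logarithmic width is $\mathcal{O}(\log M)$.

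Next, I would run geometric binary search on $[\lambda_{\min}, \lambda_{\max}]$: maintain an invariant $\ell \leq \lambda^* \leq u$; in each iteration, query~$\mathcal{A}$ on $\lambda = \sqrt{\ell u}$ and replace $[\ell, u]$ by $[\ell, \sqrt{\ell u}]$ or $[\sqrt{\ell u}, u]$ according to the dichotomy above. After~$k$ iterations the logarithmic width~$\log(u/\ell)$ has been halved $k$ times. Stopping as soon as $\log(u/\ell) \leq \log(m^{m/\varepsilon}) = \tfrac{m \log m}{\varepsilon}$ and returning $\u{\lambda} \colonequals \ell$ certifies the required relation $\u{\lambda} \leq \lambda^* \leq m^{\frac{1}{\varepsilon} m} \cdot \u{\lambda}$. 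The number of iterations is
\[
k \;=\; \log_2 \frac{\log(\lambda_{\max}/\lambda_{\min})}{\log(m^{m/\varepsilon})} + \mathcal{O}(1) \;=\; \log\log M - \log\tfrac{1}{\varepsilon} - \log m - \log\log m + \mathcal{O}(1),
\]
and each iteration spends $\mathcal{O}(N)$ time to build $d(\lambda)$ and $\mathcal{O}(T_\mathcal{A})$ time on the oracle call, which yields the claimed running-time bound.

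The main obstacle I expect is the derivation of a $\mathrm{poly}(M)$ ratio for the initial interval $[\lambda_{\min},\lambda_{\max}]$: because $S$ is accessible only through $\mathcal{A}$, the lower bound on $c^T x^{(l)}$ for vectors with $c^T x^{(l)} > 0$ cannot be read off directly and must instead be inferred from the integrality of $c$ and the encoding-length assumption on $C$. Once this preparatory bookkeeping is in place, the geometric search and the running-time calculation are routine.
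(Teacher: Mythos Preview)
Your approach is essentially the paper's: use the separation oracle as a threshold test via Lemma~\ref{lem:DecisionParametric}, sandwich $\lambda^*$ in an interval of logarithmic width $\OO(\log M)$, and shrink it geometrically until the multiplicative gap drops below $m^{m/\varepsilon}$. The paper phrases the search as an ordinary binary search over the exponents~$k$ in $\lambda_1\cdot(m^{m/\varepsilon})^k$, which is the same thing as your geometric-mean bisection.

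The one place where you and the paper genuinely diverge is the derivation of the initial interval, and here your plan has a gap. You propose to bound $c^T x^{(l)}$ from below by combining the integrality of~$c$ with the well-describedness of~$C$. That would indeed give a rational lower bound on any positive $c^T x^{(l)}$, but the bound would depend on the facet complexity of~$C$, which is \emph{not} part of the quantity~$M$ as defined in the paper ($M$ involves only $A$, $b$, $c$, $n$, $m$). So your argument yields a running time in $\log\log M'$ for some $M'$ that absorbs the cone's encoding, not the stated $\log\log M$.

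The paper sidesteps this entirely by a scale-invariance trick you are missing: since the ratio $\tfrac{a^T x^{(l)}}{c^T x^{(l)}}$ is homogeneous of degree zero in $x^{(l)}$, one may normalize the minimizer so that $\max_j x^{(l)}_j = 1$. Then every entry of $x^{(l)}$ lies in $[0,1]$ with at least one entry equal to~$1$, and both numerator and denominator can be bounded purely in terms of $n$, $m$, $\delta$ and the entries of $A$, $b$, $c$, yielding $\lambda^* \in [\delta/M^3,\,\delta M^3]$ with no reference to the cone's encoding at all. This normalization is the missing ingredient that turns your outline into a proof of the lemma as stated; the rest of your plan goes through unchanged.
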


	\begin{proof}
		Let $x^{(l)} \in S$ denote a vector with $\frac{a^T x^{(l)}}{c^T x^{(l)}} = \lambda^*$ that determines the minimum in equation~\eqref{eqn:MostViolatedSubproblem}. Using that $y_i \colonequals \frac{b_i}{\delta}$ for each $i \in \{1,\ldots,m\}$ at the beginning of the procedure, we get that
		\begin{align}
			\frac{a^T x^{(l)}}{c^T x^{(l)}}
			&= \frac{\sum_{j=1}^n a_j \cdot x^{(l)}_j}{\sum_{j=1}^n c_j \cdot x_j^{(l)}}
			= \frac{\sum_{j=1}^n \left( \sum_{i=1}^m y_i \cdot A_{ij} \right) \cdot x^{(l)}_j}{\sum_{j=1}^n c_j \cdot x_j^{(l)}}
			= \frac{\sum_{j=1}^n \sum_{i=1}^m \frac{\delta}{b_i} \cdot A_{ij} \cdot x^{(l)}_j}{\sum_{j=1}^n c_j \cdot x_j^{(l)}}. \label{eqn:SubproblemFormulaToBound}
		\end{align}
		Without loss of generality, we can assume the separation oracle~$\mathcal{A}$ to always return a vector~$x^{(l)} \in S$ with $\max_{j \in \{1,\ldots,n\}} x_j = 1$ (whenever it returns a vector at all): For each $x \in C$, it also holds that $\beta \cdot x \in C$ for some positive constant~$\beta$. Hence, if the oracle does not fulfill the required property, we can wrap it into a new oracle~$\mathcal{A'}$ which divides the vector returned by $\mathcal{A}$ by $\max_{j \in \{1,\ldots,n\}} x_i > 0$. Using this fact in equation~\eqref{eqn:SubproblemFormulaToBound}, we get the following lower and upper bound on $\lambda^*$:
		\begin{align*}
			\lambda^* &\geq \frac{\delta}{M} \cdot \frac{\sum_{j=1}^n \left(\sum_{i=1}^m A_{ij} \right) \cdot x^{(l)}_j}{\sum_{j=1}^n M \cdot x_j^{(l)}}
			\geq \frac{\delta}{M} \cdot \frac{\sum_{j=1}^n 1 \cdot x^{(l)}_j}{n \cdot M}
			\geq \frac{\delta}{n \cdot M^2} \geq \frac{\delta}{M^3} \equalscolon \lambda_1
		\intertext{and}
			\lambda^* &\leq \frac{\delta}{1} \cdot \frac{\sum_{j=1}^n \left(\sum_{i=1}^m A_{ij} \right) \cdot x^{(l)}_j}{\sum_{j=1}^n 1 \cdot x_j^{(l)}}
			\leq \delta \cdot \frac{\sum_{j=1}^n m \cdot M \cdot x^{(l)}_j}{1}
			\leq \delta \cdot nm \cdot M \leq \delta \cdot M^3 \equalscolon \lambda_2.
		\end{align*}
 
		According to Lemma~\ref{lem:DecisionParametric}, each feasible lower bound~$\u{\lambda}$ for $\lambda^*$ is characterized by the fact that $D(\u{\lambda}) \geq 0$, so an oracle call with the vector~$d(\u{\lambda})$ results in the answer that there are no vectors in $S$ with negative costs. Since $\u{\lambda}$ is required to fulfill $\u{\lambda} \leq \lambda^* \leq m^{\frac{1}{\varepsilon} m} \cdot \u{\lambda}$, we only need to consider values for $\u{\lambda}$ of the form~$\lambda_1 \cdot (m^{\frac{1}{\varepsilon} m})^k$ in $[\lambda_1,\lambda_2]$ for integral values of $k$. Moreover, since the oracle returns a vector if and only if $\u{\lambda} > \lambda^*$, we can perform a binary search on these values in order to find the best possible lower bound for $\lambda^*$. In total, we get the following number of iterations:
		\begin{align*}
			\OO\left(\log\log_{m^{\frac{1}{\varepsilon} m}} \frac{\lambda_2}{\lambda_1} \right)
			&= \OO\left(\log\log_{m^{\frac{1}{\varepsilon} m}} \dfrac{\delta \cdot M^3}{\frac{\delta}{M^3}} \right)
			= \OO\left(\log\log_{m^{\frac{1}{\varepsilon} m}} M \right) \\
			&= \OO\left(\log \frac{\log M}{\log m^{\frac{1}{\varepsilon} m}} \right)
			= \OO\left(\log \frac{\log M}{\frac{1}{\varepsilon} \cdot m \log m} \right) \\
			&= \OO\left(\log\log M - \left(\log \frac{1}{\varepsilon} + \log m + \log\log m \right) \right).
		\end{align*}
		In combination with the overhead of $N + T_\mathcal{A}$ to call the oracle (as in the proof of Theorem~\ref{thm:FPTASforGivenInitialValue}), we get the claimed time bound.
	\end{proof}

	Note that the time bound given in Lemma~\ref{lem:InitialBoundWeakly} is in fact only weakly polynomial for very large values of $M$: The binary search only has an effect on the overall running time if the encoding length~$\log M$ of $M$ fulfills $\log M \in \omega(\frac{1}{\varepsilon} \cdot m \log m)$, i.e., if $M$ is exponential in $\frac{1}{\varepsilon} \cdot m \log m$.

	Theorem~\ref{thm:FPTASforGivenInitialValue} in combination with Lemma~\ref{lem:InitialBoundWeakly} yields the following theorem:

	\begin{theorem}\label{thm:FPTASWeakly}
		Given a separation oracle~$\mathcal{A}$ for $S$ running in $T_\mathcal{A}$~time, a $(1-\varepsilon)$-approximate solution for problem~\eqref{eqn:OriginalProblem} can be determined in weakly polynomial time $\OO((T_\mathcal{A} + N) \cdot (\frac{1}{\varepsilon^2} \cdot m \log m + \log\log M - (\log \frac{1}{\varepsilon} + \log m + \log \log m)))$.\qed
	\end{theorem}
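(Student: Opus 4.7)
The plan is to obtain Theorem~\ref{thm:FPTASWeakly} as a direct composition of two pieces already established in the excerpt: the construction of an initial lower bound $\underline{\lambda}$ via Lemma~\ref{lem:InitialBoundWeakly}, followed by the FPTAS of Theorem~\ref{thm:FPTASforGivenInitialValue} initialized with this bound. No new structural ideas are needed; the entire proof is a running-time accounting argument.

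Concretely, I would first invoke Lemma~\ref{lem:InitialBoundWeakly} on the given separation oracle $\mathcal{A}$ to produce a value $\underline{\lambda}$ satisfying the hypothesis $\underline{\lambda} \leq \lambda^* \leq m^{\frac{1}{\varepsilon} m} \cdot \underline{\lambda}$ required by Theorem~\ref{thm:FPTASforGivenInitialValue}. The cost of this step is the bound stated in Lemma~\ref{lem:InitialBoundWeakly}, namely
\[
\OO\bigl((T_\mathcal{A} + N) \cdot (\log\log M - (\log \tfrac{1}{\varepsilon} + \log m + \log\log m))\bigr).
\]
I would then hand $\underline{\lambda}$ to the algorithm of Theorem~\ref{thm:FPTASforGivenInitialValue}, which, because its precondition on $\underline{\lambda}$ is met, produces a $(1-\varepsilon)$-approximate solution to problem~\eqref{eqn:OriginalProblem} in time
\[
\OO\bigl(\tfrac{1}{\varepsilon^2} \cdot m \log m \cdot (N + T_\mathcal{A})\bigr).
\]
Correctness of the resulting composite algorithm is inherited immediately from the two results it invokes.

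Summing the two contributions and factoring out the common $(N + T_\mathcal{A})$ factor yields the claimed overall bound. There is essentially no obstacle: the only thing to check is that the additive breakdown of the running time is expressed in the factored form stated in the theorem, which is immediate because both sub-bounds share the $(N+T_\mathcal{A})$ prefactor. I would close by remarking (consistently with the discussion following Lemma~\ref{lem:InitialBoundWeakly}) that the $\log\log M$ term is dominated by the $\frac{1}{\varepsilon^2} \cdot m\log m$ term unless $M$ is doubly exponentially large in the other parameters, justifying the designation as weakly polynomial.
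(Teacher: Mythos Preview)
Your proposal is correct and matches the paper's approach exactly: the paper states that Theorem~\ref{thm:FPTASforGivenInitialValue} in combination with Lemma~\ref{lem:InitialBoundWeakly} yields Theorem~\ref{thm:FPTASWeakly} and marks it with \qed\ without further argument. Your running-time accounting simply spells out this composition explicitly.
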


	In particular, if the oracle~$\mathcal{A}$ runs in polynomial time, we immediately obtain an FPTAS for problem~\eqref{eqn:OriginalProblem} according to Theorem~\ref{thm:FPTASWeakly}.

	\subsection{Strongly Polynomial-Time Approach for Sign Oracles}
	\label{sec:LowerBoundStronglySign}

	In the previous subsection, we introduced a method to determine an initial lower bound for $\lambda^*$ that is valid for each of the investigated types of oracles. However, although the general procedure that was described in Section~\ref{sec:GeneralAlgorithm} performs a strongly polynomial number of steps, the overall procedure would not yield a strongly polynomial FPTAS, in general, even if the oracle runs in strongly polynomial time due to the weakly polynomial overhead of the binary search. In this section, we present an alternative method for minimizing and sign oracles running in strongly polynomial time. In the subsequent subsection, we generalize the result to separation oracles.

	According to Lemma~\ref{lem:DecisionParametric}, we can decide about the direction of the deviation between some candidate value~$\lambda$ and the desired value~$\lambda^*$, if we are able to determine the sign of $D(\lambda)$. Clearly, this task is strongly related to the definition a sign oracle for $S$. However, the value $D(\lambda)$ is defined to be the minimum of $d(\lambda)^T x^{(l)}$ among all vectors~$x^{(l)}$ that \emph{additionally} fulfill $c^T x^{(l)} > 0$ whereas the sign oracle is not required to consider only such vectors according to Definition~\ref{def:SignOracle}. Nevertheless, as it will be shown in the following lemma, we can neglect this additional restriction when evaluating the sign oracle:

	\begin{lemma}\label{lem:SignOracleForDecision}
		For any positive value of $\lambda$, it holds that $\sgn(D(\lambda)) = \sgn(d(\lambda)^T x^{(l)})$ where $x^{(l)}$ is a vector returned by a sign oracle for $S$.
	\end{lemma}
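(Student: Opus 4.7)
The plan is to show that the restriction ``$c^T x^{(l)} > 0$'' appearing in the definition of $D(\lambda)$ is irrelevant to the sign of the minimum. Concretely, let $D'(\lambda) \colonequals \min\{d(\lambda)^T x^{(l)} : l \in \{1,\ldots,k\}\}$ denote the unrestricted minimum, which is exactly the quantity whose sign a sign oracle for $S$ is guaranteed to report through the sample vector it returns. Once we establish $\sgn(D(\lambda)) = \sgn(D'(\lambda))$, the lemma follows immediately from the definition of a sign oracle.

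The key observation, using $\lambda > 0$ and Observation~\ref{obs:EverythingPositive}, is that any vector $x^{(l)} \in S$ with $c^T x^{(l)} \leq 0$ automatically satisfies $d(\lambda)^T x^{(l)} = a^T x^{(l)} - \lambda \cdot c^T x^{(l)} \geq a^T x^{(l)} > 0$. Hence no vector excluded from the restricted minimum can ever attain a non-positive value of $d(\lambda)^T x^{(l)}$. This is the entire crux; everything else is bookkeeping.

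From here I would run a small case analysis on the three possible signs. If $D(\lambda) > 0$, then every vector in the restricted set contributes a strictly positive value, and by the observation above the same holds for every vector outside the restricted set, so $D'(\lambda) > 0$. If $D(\lambda) = 0$, some vector with $c^T x^{(l)} > 0$ achieves value $0$, while every excluded vector still contributes a strictly positive value, so $D'(\lambda) = 0$. Finally, if $D(\lambda) < 0$, the minimizer already lies in the restricted set and witnesses $D'(\lambda) \leq D(\lambda) < 0$, while trivially $D'(\lambda) \leq D(\lambda)$ in any case. Thus $\sgn(D'(\lambda)) = \sgn(D(\lambda))$ in all three cases, and since the returned $x^{(l)}$ satisfies $\sgn(d(\lambda)^T x^{(l)}) = \sgn(D'(\lambda))$ by the defining property of a sign oracle, we conclude $\sgn(d(\lambda)^T x^{(l)}) = \sgn(D(\lambda))$.

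There is no real obstacle: the whole argument hinges on the single inequality $a^T x^{(l)} > 0$ provided by Observation~\ref{obs:EverythingPositive} together with $\lambda > 0$, which together force the constraint $c^T x^{(l)} > 0$ in the definition of $D(\lambda)$ to be inactive in sign. The only point that requires a little care is to handle the three possible signs separately rather than arguing a single inequality between $D(\lambda)$ and $D'(\lambda)$, since $\sgn$ takes three values.
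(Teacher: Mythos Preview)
Your proof is correct and follows essentially the same approach as the paper: both arguments hinge on the observation that any $x^{(l)} \in S$ with $c^T x^{(l)} \leq 0$ satisfies $d(\lambda)^T x^{(l)} \geq a^T x^{(l)} > 0$ when $\lambda > 0$, so the restriction in the definition of $D(\lambda)$ cannot affect the sign of the minimum. The only cosmetic difference is that you introduce the auxiliary unrestricted minimum $D'(\lambda)$ and do the case split on $\sgn(D(\lambda))$, whereas the paper works directly with the returned vector and splits on $\sgn(d(\lambda)^T x^{(l)})$.
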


	\begin{proof}
		First consider the case that $\sgn(d(\lambda)^T x^{(l)}) = -1$, i.e., that $d(\lambda)^T x^{(l)} < 0$. Using the definition of $d(\lambda)$, we get that $(a - \lambda c)^T x^{(l)} = a^T x^{(l)} - \lambda \cdot c^T x^{(l)} < 0$. Since both $a^T x^{(l)} > 0$ according to Observation~\ref{obs:EverythingPositive} and $\lambda > 0$, it must hold that $c^T x^{(l)} > 0$ as well. Thus, we can conclude that $D(\lambda) \leq d(\lambda)^T x^{(l)} < 0$.

		Now consider the case that $\sgn(d(\lambda)^T x^{(l)}) = 0$. According to Definition~\ref{def:SignOracle}, it holds that there are no vectors~$x^{(j)} \in S$ with $d(\lambda)^T x^{(j)} < 0$, so $D(\lambda) \geq 0$. As in the previous case, we get that $(a - \lambda c)^T x^{(l)} = a^T x^{(l)} - \lambda \cdot c^T x^{(l)} = 0$ if and only if $c^T x^{(l)} > 0$ since both $a^T x^{(l)} > 0$ and $\lambda > 0$. Hence, we also get that $D(\lambda) \leq d(\lambda)^T x^{(l)} = 0$, so $D(\lambda) = 0$.

		Finally, if $\sgn(d(\lambda)^T x^{(l)}) = 1$, there are no vectors $x^{(i)} \in S$ with $d(\lambda)^T x^{(i)} \leq 0$. Thus, it also holds that $D(\lambda) > 0$, which shows the claim.
	\end{proof}

	Lemma~\ref{lem:SignOracleForDecision} now allows us to determine a sufficiently good initial lower bound~$\u{\lambda}$. In fact, as it will be shown in the following lemma, we are even able to determine an \emph{exact} most violated dual constraint in \emph{each} iteration of the procedure:

	\begin{lemma}\label{lem:SignOracleMostViolatedMegiddo}
		Given a strongly combinatorial and strongly polynomial-time sign oracle~$\mathcal{A}$ for $S$ running in $T_\mathcal{A}$~time, a most violated dual constraint can be determined in $\OO\left(N + T^2_\mathcal{A} \right)$~time.
	\end{lemma}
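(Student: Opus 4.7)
The plan is to apply Megiddo's parametric search to the sign oracle $\mathcal{A}$ itself, using the cost vector $d(\lambda) = a - \lambda c$ as the linear parametric input. By Lemma~\ref{lem:DecisionParametric} together with Lemma~\ref{lem:SignOracleForDecision}, for any candidate $\lambda > 0$ a single (non-parametric) invocation of $\mathcal{A}$ on the concrete vector $d(\lambda)$ returns some $x^{(l)} \in S$ whose sign $\sgn(d(\lambda)^T x^{(l)})$ coincides with $\sgn(\lambda^* - \lambda)$; this supplies the callback required by Megiddo's framework at a cost of $T_\mathcal{A} + \OO(n) = \OO(T_\mathcal{A})$ per call, using that $T_\mathcal{A} \in \Omega(n)$.

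First I would assemble $a = A^T y$ from the current dual solution $y$ in $\OO(N)$ time, which is a one-time setup cost independent of the parametric search. I then initialize an interval $I$ known to contain $\lambda^*$ and begin simulating the execution of $\mathcal{A}$ on the affine input $d(\lambda)$. Because $\mathcal{A}$ is strongly combinatorial, every intermediate quantity remains an affine function of $\lambda$, representable by a pair (constant, coefficient) and updated in $\OO(1)$ per elementary step. Whenever the simulation reaches a comparison between two such affine functions, one of two things happens: either the sign of their difference is already constant on $I$ (and the comparison is resolved immediately), or they cross at a unique point $\lambda' \in I$. In the latter case I invoke the callback at $\lambda'$, use its sign output to decide the comparison, and shrink $I$ to the appropriate half. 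Should the callback ever report $d(\lambda')^T x^{(l)} = 0$, we have $\lambda^* = \lambda'$ and can terminate early.

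Once the simulation of $\mathcal{A}$ completes, all of its comparisons have the same outcome for every $\lambda$ in the final interval $I$; hence $\mathcal{A}$ would return the same vector $x^{(l)} \in S$ for any such $\lambda$, including $\lambda = \lambda^*$. By Lemma~\ref{lem:SignOracleForDecision}, $\sgn(d(\lambda^*)^T x^{(l)}) = \sgn(D(\lambda^*)) = 0$, so $(a - \lambda^* c)^T x^{(l)} = 0$, which immediately yields $\lambda^* = \frac{a^T x^{(l)}}{c^T x^{(l)}}$. Thus $x^{(l)}$ witnesses an exact most violated dual constraint, and its value can be read off from the ratio.

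For the running time, the setup costs $\OO(N)$. The simulation itself performs $\OO(T_\mathcal{A})$ elementary steps, and each comparison triggers at most one callback of cost $\OO(T_\mathcal{A})$, so the parametric search contributes $\OO(T_\mathcal{A}^2)$. Adding both gives the claimed $\OO(N + T_\mathcal{A}^2)$ bound. The main technical point to get right is the legitimacy of using the sign oracle as the decision subroutine: although $D(\lambda)$ is defined as a minimum restricted to vectors with $c^T x^{(l)} > 0$, Lemma~\ref{lem:SignOracleForDecision} already shows that an unrestricted call to $\mathcal{A}$ correctly recovers $\sgn(D(\lambda))$, so no additional filtering is required and the parametric simulation can proceed unimpeded.
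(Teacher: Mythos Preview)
Your proposal is correct and follows essentially the same approach as the paper: simulate the sign oracle~$\mathcal{A}$ on the parametric input~$d(\lambda)=a-\lambda c$ via Megiddo's technique, use a non-parametric call to~$\mathcal{A}$ (justified by Lemmas~\ref{lem:DecisionParametric} and~\ref{lem:SignOracleForDecision}) as the callback at each intersection point~$\lambda'$, and read off the most violated constraint from the vector returned at termination. Your write-up is slightly more explicit than the paper's in arguing why the final returned vector satisfies $d(\lambda^*)^T x^{(l)}=0$ (constancy of the control flow on the final interval~$I$), but the underlying idea and the running-time accounting are identical.
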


	\begin{proof}
		Lemma~\ref{lem:DecisionParametric} and Lemma~\ref{lem:SignOracleForDecision} imply that $\lambda^*$ is the unique value for $\lambda$ for which the sign oracle returns a vector~$x^{(l)} \in S$ with $d(\lambda)^T x^{(l)} = 0$. In particular, the returned vector~$x^{(l)}$ is a minimizer for \eqref{eqn:MostViolatedSubproblem}. Hence, since the values~$a_i$ can be computed in $\OO(N)$~time, we are done if we are able to determine such a vector~$x^{(l)}$ in $\OO(T^2_\mathcal{A})$~time.

		Let $d(\lambda)$ be defined as above, where $\lambda$ is now treated as a \emph{symbolic} value that is known to be contained in an interval~$I$. Initially, we set~$I$ to $(0,+\infty)$ since the optimal value~$\lambda^*$ is known to be strictly positive (cf. equation~\eqref{eqn:MostViolatedSubproblem}). Note that the costs~$(d(\lambda))_i = a_i - \lambda \cdot c_i$ fulfill the linear parametric value property. We simulate the execution of the sign oracle~$\mathcal{A}$ at input~$d(\lambda)$ using \citets{MegiddoCombinatorialOptimization} parametric search technique as described in Section~\ref{sec:Pre:Megiddo}. The underlying idea is to ``direct'' the control flow during the execution of $\mathcal{A}$ in a way such that it eventually returns the desired vector minimizing \eqref{eqn:MostViolatedSubproblem}.

		Whenever we need to resolve a comparison between two linear parametric values that intersect at some point~$\lambda' \in I$, we call the sign oracle itself with the cost vector~$d \colonequals d(\lambda')$ in order to determine the sign of $D(\lambda')$. If $D(\lambda') = 0$, we found a minimizer for equation~\eqref{eqn:MostViolatedSubproblem} and are done. If $D(\lambda') < 0$ ($D(\lambda') > 0$), the candidate value~$\lambda'$ for $\lambda^*$ was too large (too small) according to Lemma~\ref{lem:DecisionParametric} and Lemma~\ref{lem:SignOracleForDecision} such that we can update the interval~$I$ to $I \cap (-\infty, \lambda')$ ($I \cap (\lambda', +\infty)$), resolve the comparison, and continue the simulation of the oracle algorithm. After $\OO(T_\mathcal{A})$ steps, the simulation terminates and returns a vector~$x^{(l)} \in S$ that fulfills~$d(\lambda^*)^T x^{(l)} = 0$ for the desired value~$\lambda^* \in I$. Hence, this vector yields a most violated constraint in \eqref{eqn:DualConstraint}. Since the described simulation runs in $\OO(T_\mathcal{A}^2)$~time, the claim follows.
	\end{proof}

	Note that we actually still obtain a polynomial running-time of the above algorithm even if we do not assume the sign oracle to run in \emph{strongly} polynomial time but only to run in \emph{weakly} polynomial time. However, the running-time of the resulting algorithm might exceed the stated time bound since the candidate values~$\lambda'$ that determine the input to the callback oracle are rational numbers whose representation might involve exponential-size numbers of the form $H^{T_\mathcal{A}}$ for some~$H$ with polynomial encoding length. Although the running-time of a weakly polynomial-time oracle algorithm depends only logarithmically on the size of these numbers, the running-time might still increase by a large (polynomial) factor.

	Lemma~\ref{lem:SignOracleMostViolatedMegiddo} can be incorporated into the procedure of \citet{GargKoenemann} to obtain an FPTAS for problem~\eqref{eqn:OriginalProblem} running in $\OO(\frac{1}{\varepsilon^2} \cdot m \log m \cdot (N + T^2_\mathcal{A}))$~time. However, it can also be used to find an initial lower bound~$\u{\lambda}$ for $\lambda^*$ (which, in fact, equals $\lambda^*$), which yields the following theorem in combination with Theorem~\ref{thm:FPTASforGivenInitialValue}:

	\begin{theorem}\label{thm:FPTASStronglySign}
		Given a strongly combinatorial and strongly polynomial-time sign oracle~$\mathcal{A}$ for $S$ running in $T_\mathcal{A}$~time, there is a strongly polynomial FPTAS for the problem~\eqref{eqn:OriginalProblem} running in $\OO\left(\frac{1}{\varepsilon^2} \cdot m\log m \cdot \left(N + T_\mathcal{A} \right) + T^2_\mathcal{A} \right)$ time. \qed
	\end{theorem}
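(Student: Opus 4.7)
The plan is to combine the initial-bound machinery of Lemma~\ref{lem:SignOracleMostViolatedMegiddo} with the general FPTAS framework of Theorem~\ref{thm:FPTASforGivenInitialValue}. The trick is that the parametric-search routine from Lemma~\ref{lem:SignOracleMostViolatedMegiddo} is expensive (it costs $\OO(N+T_\mathcal{A}^2)$ because every comparison inside the simulated oracle call triggers another oracle call), so we cannot afford to invoke it in every iteration of the Garg--Koenemann procedure. Fortunately we only need to invoke it \emph{once}, at the very beginning.

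First I would run the parametric-search routine of Lemma~\ref{lem:SignOracleMostViolatedMegiddo} on the initial dual solution~$y_i=\delta/b_i$. By that lemma this produces, in $\OO(N+T_\mathcal{A}^2)$ time, a vector $x^{(l)}\in S$ that attains the minimum in \eqref{eqn:MostViolatedSubproblem}, together with the corresponding ratio $\lambda^*$. I then set $\u\lambda\colonequals\lambda^*$ (the exact initial value of the most violated dual constraint). Trivially $\u\lambda\leq\lambda^*\leq m^{\frac1\varepsilon m}\cdot\u\lambda$, so the hypothesis of Theorem~\ref{thm:FPTASforGivenInitialValue} is satisfied.

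Next I would plug this initial $\u\lambda$ into the general algorithm of Section~\ref{sec:GeneralAlgorithm} and invoke Theorem~\ref{thm:FPTASforGivenInitialValue}. That theorem guarantees that a $(1-\varepsilon)$-approximate solution for~\eqref{eqn:OriginalProblem} is produced in $\OO\!\left(\frac{1}{\varepsilon^2}\cdot m\log m\cdot (N+T_\mathcal{A})\right)$ additional time, using only ordinary separation-style calls of~$\mathcal{A}$ (any sign oracle is in particular a separation oracle). Adding the cost of the one-time initialization yields the claimed bound $\OO\!\left(\frac{1}{\varepsilon^2}\cdot m\log m\cdot(N+T_\mathcal{A})+T_\mathcal{A}^2\right)$.

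The only subtlety I would want to double-check is that strong polynomiality is preserved end-to-end. The initialization step inherits strong polynomiality from the assumption that $\mathcal{A}$ is strongly combinatorial and strongly polynomial-time, as already used inside the proof of Lemma~\ref{lem:SignOracleMostViolatedMegiddo}. The subsequent Garg--Koenemann/Fleischer iterations perform only arithmetic operations on rationals derived from the input and from the fixed $\u\lambda$, so their count depends on $m$ and $\frac1\varepsilon$ alone; in particular the update-step bound from the proof of Theorem~\ref{thm:FPTASforGivenInitialValue} is independent of~$M$. Hence the whole procedure runs in strongly polynomial time, which is the main qualitative improvement over Theorem~\ref{thm:FPTASWeakly}.
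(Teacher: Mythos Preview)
Your proposal is correct and matches the paper's approach exactly: the paper states that Lemma~\ref{lem:SignOracleMostViolatedMegiddo} ``can also be used to find an initial lower bound~$\u{\lambda}$ for $\lambda^*$ (which, in fact, equals $\lambda^*$), which yields the following theorem in combination with Theorem~\ref{thm:FPTASforGivenInitialValue},'' and then marks the theorem with a \qed. Your elaboration of the strong-polynomiality check is a useful addition that the paper leaves implicit.
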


	\subsection{Strongly polynomial-time approach for Separation Oracles}
	\label{sec:LowerBoundStronglySeparation}

	Although separation oracles are probably the most natural kind of oracle, they are also the weakest of the considered oracle types. The proof of Lemma~\ref{lem:SignOracleMostViolatedMegiddo} relies on the fact that we are able to decide if some candidate value~$\lambda$ is too small, too large, or equal to the desired value. In the case of a separation oracle, however, the case that $d^T x^{(i)} \geq 0$ for all vectors~$x^{(i)} \in S$ does no longer include the information whether there is a vector~$x^{(l)} \in S$ with $d^T x^{(l)} = 0$ (in which case we have found the desired vector in the parametric search as described above) or if $d^T x^{(i)} > 0$ for all $x^{(i)} \in S$. For example, if we come across a comparison of the form~$a_0 + \lambda \cdot a_1 \leq b_0 + \lambda \cdot b_1$ during the simulation where $a_1 > b_1$, we are actually interested in the information whether or not the optimal value~$\lambda^*$ fulfills $\lambda^* \leq \lambda' \colonequals \frac{b_0 - a_0}{a_1 - b_1}$. However, if we use the separation oracle with the cost vector~$d(\lambda')$, we only obtain the information whether $\lambda^* < \lambda'$ (in case that the oracle returns a certificate) or if $\lambda^* \geq \lambda'$. Hence, in the latter case, the outcome of the comparison is not yet resolved since we need the additional information whether or not $\lambda^* = \lambda'$, so we cannot continue the simulation without any further ado. Nevertheless, as it will be shown in the following lemma, we can gather this additional information by a more sophisticated approach:

	\begin{lemma}\label{lem:SeparationOracleMostViolatedMegiddo}
		Given a strongly combinatorial and strongly polynomial-time separation oracle~$\mathcal{A}$ for $S$ running in $T_\mathcal{A}$~time, a most violated dual constraint can still be determined in $\OO\left(N + T^2_\mathcal{A} \right)$ time.
	\end{lemma}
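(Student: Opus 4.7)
The plan is to mirror the parametric-search argument of Lemma~\ref{lem:SignOracleMostViolatedMegiddo} but to compensate for the weakness of the separation oracle via a fixed, one-sided tie-breaking rule, so that a single oracle call per comparison still suffices. First I would simulate $\mathcal{A}$ on the symbolic cost vector $d(\lambda) = a - \lambda c$ using \citets{MegiddoCombinatorialOptimization} technique, maintaining throughout an interval $I = (\u{\lambda}, \o{\lambda})$ initialised to $(0, +\infty)$ that is known to contain $\lambda^*$. Whenever a comparison between two affine functions of $\lambda$ must be resolved at their intersection point $\lambda' \in I$, I invoke the separation oracle once at $d(\lambda')$: if it returns a certificate, Lemma~\ref{lem:DecisionParametric} gives $\lambda^* < \lambda'$, and I set $\o{\lambda} \gets \lambda'$ and resolve the comparison using the branch $\lambda < \lambda'$; otherwise, the oracle certifies $D(\lambda') \geq 0$ and hence $\lambda^* \geq \lambda'$, and I set $\u{\lambda} \gets \lambda'$ and resolve using the strict branch $\lambda > \lambda'$, deliberately breaking the possible tie $\lambda^* = \lambda'$ in favour of $>$.

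The hard part will be to show that this one-sided tie-breaking still allows me to extract a minimizer for \eqref{eqn:MostViolatedSubproblem} at the end, since the sign-oracle argument of Lemma~\ref{lem:SignOracleMostViolatedMegiddo} -- that the simulated output fulfills $d(\lambda^*)^T x^{(l)} = 0$ -- no longer applies directly. Because $\u{\lambda}$ is only updated when $\lambda^* \geq \lambda'$ and $\o{\lambda}$ only when $\lambda^* < \lambda'$, the invariant $\u{\lambda} \leq \lambda^* < \o{\lambda}$ is maintained, so $(\lambda^*, \o{\lambda}) \subseteq I$ is non-empty upon termination (with the convention $\o{\lambda} = +\infty$ if never updated). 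For every $\lambda \in (\u{\lambda}, \o{\lambda})$ each resolution made by the simulation is simultaneously correct for that $\lambda$, so the simulated output coincides with $\mathcal{A}(d(\lambda))$. Since $D(\lambda) < 0$ strictly for $\lambda > \lambda^*$, this common output must be a certificate $x^{(l)} \in S$, and because it is a single fixed vector returned for every $\lambda \in (\lambda^*, \o{\lambda})$, it satisfies $d(\lambda)^T x^{(l)} < 0$ throughout this interval. Letting $\lambda \downarrow \lambda^*$ yields $d(\lambda^*)^T x^{(l)} \leq 0$; moreover, $a^T x^{(l)} < \lambda c^T x^{(l)}$ in combination with Observation~\ref{obs:EverythingPositive} forces $c^T x^{(l)} > 0$, so that $d(\lambda^*)^T x^{(l)} \geq D(\lambda^*) = 0$. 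Both inequalities combined give $d(\lambda^*)^T x^{(l)} = 0$, i.e., $\frac{a^T x^{(l)}}{c^T x^{(l)}} = \lambda^*$, so that $x^{(l)}$ yields a most violated constraint of \eqref{eqn:DualConstraint}.

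For the time bound, the outer simulation executes $\OO(T_\mathcal{A})$ steps, each resolving at most one comparison via a single call to $\mathcal{A}$ on a concrete (non-symbolic) input in $\OO(T_\mathcal{A})$ time -- crucially, no nested symbolic parametric search is required. Precomputing $a = A^T y$ costs $\OO(N)$ once, and each query vector $d(\lambda')$ can be assembled from $a$ in $\OO(n) \subseteq \OO(T_\mathcal{A})$ additional time, yielding the claimed total of $\OO(N + T_\mathcal{A}^2)$.
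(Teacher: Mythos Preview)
Your proposal is correct and takes a genuinely different route from the paper's proof. The paper resolves the ambiguity between $D(\lambda') = 0$ and $D(\lambda') > 0$ by issuing a \emph{second} oracle call at a perturbed point $d(\lambda' + \delta)$, and then spends most of the proof arguing that a sufficiently small~$\delta$ with polynomial encoding length exists (by bounding the denominators of all roots of the affine functions $f^{(i)}$ and of the comparison functions arising during the simulation). This effectively upgrades the separation oracle to a sign oracle, after which Lemma~\ref{lem:SignOracleMostViolatedMegiddo} is invoked verbatim.

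You sidestep this entirely: by always taking the strict branch~$\lambda > \lambda'$ when the oracle reports non-negativity, you maintain the half-open invariant $\u{\lambda} \leq \lambda^* < \o{\lambda}$, and then exploit that the simulation's \emph{single} output must agree with $\mathcal{A}(d(\lambda))$ on the whole non-empty open interval $(\lambda^*, \o{\lambda})$. The continuity argument $\lambda \downarrow \lambda^*$ then pins down $d(\lambda^*)^T x^{(l)} = 0$. This is more elementary (no encoding-length bookkeeping, no computation of~$\delta$) and uses one oracle call per comparison instead of two. One small point worth making explicit is why the simulated output is a \emph{single fixed} vector rather than an affine function of~$\lambda$: this follows because the trace is fixed, the output components are therefore affine in~$\lambda$, and an affine function taking values in the finite set~$S$ on a non-degenerate interval must be constant. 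What the paper's approach buys, in exchange for the extra work, is a standalone reduction from separation oracles to sign oracles that could be reused elsewhere; your argument is tailored to this lemma but is cleaner for it.
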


	\begin{proof}
		The claim directly follows from Lemma~\ref{lem:SignOracleMostViolatedMegiddo} if we can show that we can extend the given separation oracle into a sign oracle for $S$. As in the proof of Lemma~\ref{lem:SignOracleMostViolatedMegiddo}, we simulate the execution of the separation oracle using the parametric cost vector~$d(\lambda) \colonequals a - \lambda c$. Assume that the execution halts at a comparison that needs to be resolved, resulting in a candidate value~$\lambda'$ for the desired value~$\lambda^*$. We invoke the separation oracle with the cost vector~$d \colonequals d(\lambda')$. Clearly, if the oracle returns a certificate~$x^{(l)}$ with $d^T x^{(l)} < 0$, we can conclude that $D(\lambda') < 0$ such that the value~$\lambda'$ was too large according to Lemma~\ref{lem:DecisionParametric} and the result of the comparison is determined. Conversely, if the oracle states that $d^T x^{(i)} \geq 0$ for all $x^{(i)} \in S$, we can conclude that $D(\lambda') \geq 0$. However, we may not yet be able to resolve the comparison since its result may rely on the additional information whether $D(\lambda') = 0$ or $D(\lambda') > 0$ as shown above. Nevertheless, we can extract this information by one additional call to the oracle as it will be shown in the following.

		First suppose that $\lambda' = \lambda^*$. In this situation, it holds that $d(\lambda')^T x^{(i)} \geq 0$ for all $x^{(i)} \in S$ and there is at least one vector~$x^{(l)} \in S$ that fulfills $d(\lambda')^T x^{(l)} = 0$. Since all the functions~$f^{(i)}(\lambda) \colonequals d(\lambda)^T x^{(i)} = a^T x^{(i)} - \lambda \cdot c^T x^{(i)}$ are linear functions of $\lambda$ with negative slope (in case that $c^T x^{(i)} > 0$; otherwise, the function has no positive root at all), it holds that several functions~$f^{(l)}$ evaluate to zero at $\lambda'$ while every other function attains its root at a larger value for $\lambda$ (cf. Figure~\ref{fig:ParametricFunctionEqual}). Hence, for every larger value of $\lambda$, the separation oracle changes its outcome and returns a certificate. In particular, for a sufficiently small but positive value of $\delta$, the separation oracle called with the cost vector~$d(\lambda' + \delta)$ returns a vector~$x^{(l)} \in S$ with $d(\lambda' + \delta) x^{(l)} < 0$ that additionally fulfills $d(\lambda')^T x^{(l)} = 0$ (so $x^{(l)}$ yields a most violated constraint in the overall procedure). Clearly, the value of $\delta$ must be small enough to guarantee that we do not reach the root of another function~$f^{(i)}$ (i.e., smaller than the distance between the dashed and the dotted line in Figure~\ref{fig:ParametricFunctionEqual}).

		\begin{figure}[ht!]
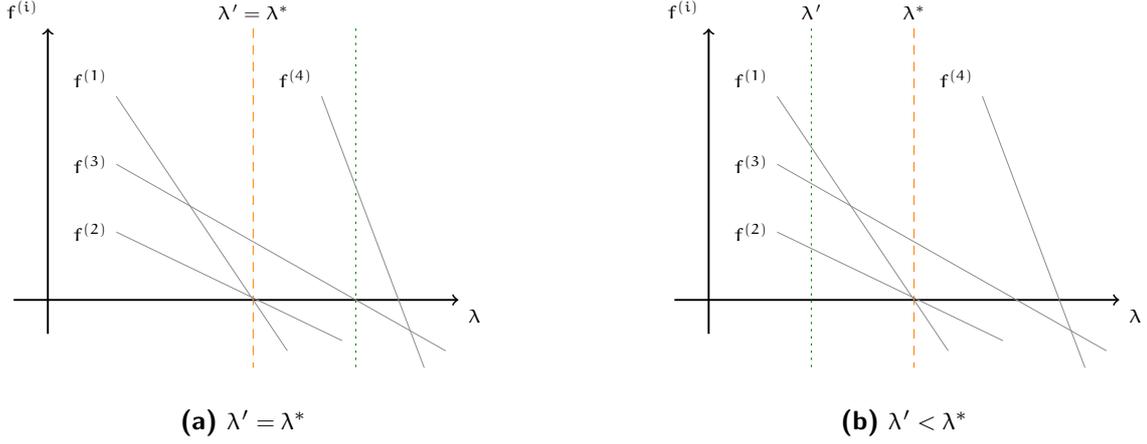

			\subfloat[][$\lambda' = \lambda^*$]{
				\centering
				\begin{minipage}[b][5.5cm][t]{0.5\textwidth}
					\centering
					\scaledtikz{0.9}{
						\draw[->,thick] (-0.5,0) -- (6,0) node[below right] {\scriptsize $\lambda$};
						\draw[->,thick] (0,-0.5) -- (0,4) node[above left] {\scriptsize $f^{(i)}$};

						\draw[dashed,orange] (3,4) -- (3,-1.0) node[above,pos=0,black] {\scriptsize $\lambda' = \lambda^*$};
						\draw[dotted,darkgreen] (4.5,4) -- (4.5,-1.0);

						\draw[halfgray] (1,3) -- (3.5,-0.75) node[above left,pos=0,black] {\scriptsize $f^{(1)}$};
						\draw[halfgray] (1,1) -- (4.3,-0.6) node[left,pos=0,black] {\scriptsize $f^{(2)}$};

						\draw[halfgray] (1,2) -- (5.8125,-0.75) node[left,pos=0,black] {\scriptsize $f^{(3)}$};
						\draw[halfgray] (4,3) -- (5.5,-1) node[above left,pos=0,black] {\scriptsize $f^{(4)}$};
					}
				\end{minipage}
				\label{fig:ParametricFunctionEqual}
			}
			\subfloat[][$\lambda' < \lambda^*$]{
				\centering
				\begin{minipage}[b][5.5cm][t]{0.5\textwidth}
					\centering
					\scaledtikz{0.9}{
						\draw[->,thick] (-0.5,0) -- (6,0) node[below right] {\scriptsize $\lambda$};
						\draw[->,thick] (0,-0.5) -- (0,4) node[above left] {\scriptsize $f^{(i)}$};

						\draw[dotted,darkgreen] (1.5,4) -- (1.5,-1.0) node[above,pos=0,black] {\scriptsize $\lambda'$};
						\draw[dashed,orange] (3,4) -- (3,-1.0) node[above,pos=0,black] {\scriptsize $\lambda^*$};

						\draw[halfgray] (1,3) -- (3.5,-0.75) node[above left,pos=0,black] {\scriptsize $f^{(1)}$};
						\draw[halfgray] (1,1) -- (4.3,-0.6) node[left,pos=0,black] {\scriptsize $f^{(2)}$};

						\draw[halfgray] (1,2) -- (5.8125,-0.75) node[left,pos=0,black] {\scriptsize $f^{(3)}$};
						\draw[halfgray] (4,3) -- (5.5,-1) node[above left,pos=0,black] {\scriptsize $f^{(4)}$};
					}
				\end{minipage}
				\label{fig:ParametricFunctionSmaller}
			}
			\caption[Illustration of the two cases that might occur during the simulation of the separation oracle]{Illustration of the two cases that may occur during the simulation of the separation oracle in case that the separation oracle did not return a certificate when evaluated for a candidate value~$\lambda'$.}
			\label{fig:ParametricFunction}
		\end{figure}

		Now suppose that $\lambda' < \lambda^*$ (cf. Figure~\ref{fig:ParametricFunctionSmaller}). In this case, for a sufficiently small but positive value of $\delta$, the separation oracle returns the \emph{same} answer when called with the cost vector~$d(\lambda' + \delta)$ as long as $\lambda' + \delta \leq \lambda^*$ (i.e., as long as $\delta$ is smaller than the distance between the dotted and the dashed line in Figure~\ref{fig:ParametricFunctionSmaller}). Consequently, in order to separate this case from the former case, it suffices to specify a value for $\delta$ that is smaller than the distance between any two roots of the functions that occur both in the instance and during the simulation of $\mathcal{A}$. We can then use a second call to the decision oracle in order to decide whether a candidate value~$\lambda'$ is smaller than or equal to the optimal value~$\lambda^*$.

		First note that the root of each function~$f^{(i)}$ is given by the rational number~$\frac{a^T x^{(i)}}{c^T x^{(i)}}$. Since the coefficients~$c_j$ are part of the instance~$I$ of the problem~\eqref{eqn:OriginalProblem} and since the values~$a_j = \sum_{i=1}^m y_i \cdot A_{ij}$ are generated within the framework of \citet{GargKoenemann}, the encoding length of both values is polynomial in the problem size. Similarly, as noted in Section~\ref{sec:Cones}, we can assume that the encoding lengths of all vectors~$x^{(i)}$ returned by the oracle are in $\Omega(n)$ and, since the oracle runs in polynomial time, polynomially bounded in the instance size. Consequently, there is some bound~$M_f$ with polynomial encoding length such that the root of each function~$f^{(i)}$ can be represented by a fraction~$\frac{p_i}{q_i}$ with $p_i,q_i \in \mathbb{N}$ and $q_i \leq M_f$.

		Now consider the root~$-\frac{a_0 - b_0}{a_1 - b_1}$ of some function~$g$ of the form $g(\lambda) \colonequals (a_0 - b_0) + \lambda \cdot (a_1 - b_1)$ that stems from a comparison of two linear parametric values of the forms~$a_0 + \lambda \cdot a_1$ and $b_0 + \lambda \cdot b_1$. Assume that we are in the $k$-th step of the simulation. Since the oracle algorithm is strongly combinatorial, the values~$a_0 + \lambda \cdot a_1$ and $b_0 + \lambda \cdot b_1$ result from one or more of the input values~$d_j \colonequals a_j - \lambda \cdot c_j$ (which are the only linear parametric values at the beginning of the simulation) as well as a sequence of up to $k$ additions or subtractions with other linear parametric values and multiplications with constants. Hence, since $k \in \OO(T_\mathcal{A})$ and $\mathcal{A}$ runs in (strongly) polynomial time, there is a bound~$M_g$ with polynomial encoding length such that the root~$-\frac{a_0 - b_0}{a_1 - b_1}$ of each such function~$g$ considered up to the $k$-th step of the simulation can be represented by a fraction of the form~$\frac{p}{q}$ with $p,q \in \mathbb{N}$ and $q \leq M_g$.

		Now let $\mu_1 = \frac{p_1}{q_1}$ and $\mu_2 = \frac{p_2}{q_2}$ with $\mu_1 \neq \mu_2$ denote the roots of two of the above functions of the form~$f^{(i)}$ or $g$. Since $q_1,q_2 \leq M_f \cdot M_g$, we get that
		\begin{align*}
			\left| \mu_1 - \mu_2 \right| = \left| \frac{p_1}{q_1} - \frac{p_2}{q_2} \right| = \left| \frac{p_1 \cdot q_2 - p_2 \cdot q_1}{q_1 \cdot q_2} \right| \geq \frac{1}{M_f^2 \cdot M_g^2} \equalscolon \mu
		\end{align*}
		Hence, choosing $\delta \colonequals \frac{\mu}{2}$, we are able to differentiate between the three cases~$D(\lambda) < 0$, $D(\lambda) = 0$, and $D(\lambda) > 0$. Moreover, by returning \emph{any\footnotemark} vector in $S$ in the case of $D(\lambda) > 0$ and returning the certificate in every other case, the separation oracle is extended into a sign oracle and the correctness follows from the proof of Lemma~\ref{lem:SignOracleMostViolatedMegiddo}. Note that the running time remains to be $\OO\left(N + T^2_\mathcal{A} \right)$ (as in the case of a sign oracle in Lemma~\ref{lem:SignOracleMostViolatedMegiddo}) since the encoding length of the number~$\delta$ is polynomially bounded and since the oracle algorithm is assumed to run in strongly polynomial time.
		\footnotetext{Actually, since we do not have direct access to the set~$S$, we need to obtain such a vector via an oracle access. However, by calling the oracle once more with a very large value for~$\lambda$ or by returning \emph{some} vector found before, we obtain a certificate in $S$, which we can return.}
	\end{proof}

	Lemma~\ref{lem:SeparationOracleMostViolatedMegiddo} now yields one of the main results of this paper:

	\begin{theorem}\label{thm:FPTASStronglySeparation}
		Given a strongly combinatorial and strongly polynomial-time sign oracle~$\mathcal{A}$ for $S$ running in $T_\mathcal{A}$~time, there is a strongly polynomial FPTAS for the problem~\eqref{eqn:OriginalProblem} running in $\OO\left(\frac{1}{\varepsilon^2} \cdot m\log m \cdot \left(N + T_\mathcal{A} \right) + T^2_\mathcal{A} \right)$ time. \qed
	\end{theorem}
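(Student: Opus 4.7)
The plan is to combine Lemma~\ref{lem:SignOracleMostViolatedMegiddo} with Theorem~\ref{thm:FPTASforGivenInitialValue}. Since the hypothesis furnishes a strongly combinatorial, strongly polynomial-time \emph{sign} oracle, Lemma~\ref{lem:SignOracleMostViolatedMegiddo} applies verbatim and allows me to determine an \emph{exact} most violated dual constraint in $\OO(N + T_\mathcal{A}^2)$~time. This computation in particular produces the exact value~$\lambda^*$ of the minimum in \eqref{eqn:MostViolatedSubproblem} corresponding to the dual solution~$y$ with which the procedure of \citet{GargKoenemann} is initialized (indeed, from the vector~$x^{(l)}$ returned by the simulation one can read off $\lambda^* = a^T x^{(l)} / c^T x^{(l)}$).

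I would perform this preprocessing once up front and set $\u{\lambda} \colonequals \lambda^*$. Because $\u{\lambda} = \lambda^*$, the precondition $\u{\lambda} \leq \lambda^* \leq m^{\frac{1}{\varepsilon} m} \cdot \u{\lambda}$ of Theorem~\ref{thm:FPTASforGivenInitialValue} is satisfied trivially, with equality on the left and with vast slack on the right. As already argued in Section~\ref{sec:GeneralAlgorithm}, $\u{\lambda}$ then continues to bound $\lambda^*$ from below throughout every subsequent iteration, since the entries of~$a$ only grow during the Garg--Koenemann update while the denominators~$c^T x^{(l)}$ remain constant. Moreover, every sign oracle is in particular a separation oracle (Section~\ref{sec:Cones}), so~$\mathcal{A}$ can be fed directly into Theorem~\ref{thm:FPTASforGivenInitialValue} without any wrapping.

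Invoking Theorem~\ref{thm:FPTASforGivenInitialValue} with~$\u{\lambda}$ and the given sign oracle then produces a $(1-\varepsilon)$-approximate solution for~\eqref{eqn:OriginalProblem} in $\OO\bigl(\frac{1}{\varepsilon^2} \cdot m \log m \cdot (N + T_\mathcal{A})\bigr)$~time; adding the $\OO(N + T_\mathcal{A}^2)$ preprocessing cost of Lemma~\ref{lem:SignOracleMostViolatedMegiddo} gives exactly the bound claimed in the theorem. The only point I would verify carefully -- and the mild main obstacle -- is that the parametric-search simulation inside Lemma~\ref{lem:SignOracleMostViolatedMegiddo} indeed stays strongly polynomial when driven by a sign oracle: concretely, that the candidate parameters~$\lambda'$ at which the callback is invoked have polynomial encoding length. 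This follows from the ``strongly combinatorial'' hypothesis on~$\mathcal{A}$ together with the fact that the initial coefficients~$a_j$ (built from~$y_i = \delta/b_i$) and the cost coefficients~$c_j$ already have polynomial encoding length, so every intermediate affine function of~$\lambda$ arising during the simulation has rational roots of polynomial size. Beyond this bookkeeping, the theorem follows by simply gluing the two previously established results together.
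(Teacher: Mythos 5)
Your proposal is correct and matches the paper's own (implicit) argument: the paper likewise proves this theorem by one up-front application of the parametric-search lemma to obtain the exact initial value $\lambda^*$ as the lower bound $\underline{\lambda}$ and then invoking Theorem~\ref{thm:FPTASforGivenInitialValue} with the oracle (which also serves as a separation oracle), giving the stated $\OO\left(\frac{1}{\varepsilon^2}\cdot m\log m\cdot(N+T_\mathcal{A})+T^2_\mathcal{A}\right)$ bound. The only remark worth adding is that this theorem sits in the section on separation oracles and is announced as a consequence of Lemma~\ref{lem:SeparationOracleMostViolatedMegiddo}, so its ``sign oracle'' wording is apparently a verbatim repetition of Theorem~\ref{thm:FPTASStronglySign}; for the statement as literally written, your route via Lemma~\ref{lem:SignOracleMostViolatedMegiddo} is exactly the paper's own.
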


	\section{Applications}
	\label{sec:Applications}

	In this section, we present several applications of the introduced framework. We will be able to derive new strongly polynomial-time FPTASs for several well-known network flow and packing problems and complement or even improve upon well-known results. All graphs considered in this section are assumed to be connected, such that the number of nodes~$n$ fulfills $n \in \mathcal{O}(m)$.

	\subsection{Budget-Constrained Maximum Flows}
	\label{sec:Applications:BCMFP}

	In the \emph{budget-constrained maximum flow problem}, the aim is to determine a flow with maximum value in an $s$-$t$-network that is additionally restricted by a \emph{budget-constraint} of the form $\sum_{e \in E} b_e \cdot x_e \leq B$ for non-negative integers~$b_e \in \mathbb{N}$ for each $e \in E$ a budget~$B \in \mathbb{N}$. The problem is known to be efficiently solvable by combinatorial algorithms, both in weakly polynomial-time \citep{AhujaConstrainedMaxFlow,CalicskanDoubleScaling,CalicskanAhuja,CalicskanCostScaling} and in strongly polynomial-time \citep{BudgetConstrainedMinCostFlows}. In the following, we present a strongly polynomial-time FPTAS for the problem, which is both much more simple and efficient than the exact strongly polynomial-time algorithm.

	In order to apply our framework, we need to show that each feasible solution is decomposable in some kind of basic component and that we are able to handle these components appropriately. Without loss of generality, since each budget-constrained maximum flow~$x$ is also a traditional $s$-$t$-flow and since flows on cycles do not contribute to the flow value, it holds that $x$ can be decomposed into $m' \leq m$ flows~$\o{x}^{(j)}$ on $s$-$t$-paths~$P_j$ such that $x = \sum_{j=1}^{m'} \o{x}^{(j)}$. Hence, if $x^{(l)}$ denotes the flow with unit flow value on some path~$P_l$ in the set of $s$-$t$-paths~$\{P_1,\ldots,P_k\}$, it holds that each (budget-constrained) maximum flow~$x$ is contained in the cone~$C$ that is generated by the vectors in the set~$S \colonequals \{x^{(l)} : l \in \{1,\ldots,k\}\}$. Consequently, we can formulate the budget-constrained maximum flow problem as follows:
	\begin{subequations}\label{eqn:CMFP}
	\begin{align}
		\max\ & \sum_{e \in E} c_e \cdot x_e \\
		\text{s.t.}\ & \sum_{e \in E} b_e \cdot x_e \leq B, \\
		& x_e \leq u_e && \text{for each } e \in E, \\
		& x \in C,
	\end{align}
	\end{subequations}
	where $c_e = 1$ if $e \in \delta^-(t)$, and $c_e = 0$, else. Note that the flow conservation constraints are now modeled by the containment in the cone~$C$, such that a packing problem over a polyhedral cone remains, i.e., a problem of the form~\eqref{eqn:OriginalProblem}.

	In the above formulation, it holds that $c^T x^{(l)} = \hat{c} \colonequals 1$ for each $x^{(l)} \in S$ since each $s$-$t$-path contributes equally to the value of the flow. Hence, we can apply Theorem~\ref{thm:FPTASMinimizing} if we can show that there is a minimizing oracle for $S$, i.e., that we can determine a vector~$x^{(l)}$ minimizing~$d^T x^{(l)}$ for a given cost vector~$d$. This simply reduces to the determination of a shortest $s$-$t$-path with respect to the edge lengths~$d$. Note that, since the vector~$a$ is always positive in each component according to Observation~\ref{obs:EverythingPositive} and since $\hat{c} = 1$, we need to search for a shortest path with non-negative edge lengths in $\SP(m,n) \in \OO(m + n\log n)$~time according to the proof of Theorem~\ref{thm:FPTASMinimizing}. Thus, we get that there is an FPTAS for the budget-constrained maximum flow problem running in $\OO\left(\frac{1}{\varepsilon^2} \cdot m\log m \cdot \SP(m,n) \right)$~time since the number~$N$ of non-zero entries in the constraint matrix in \eqref{eqn:CMFP} is bounded by $2m \in \OO(SP(m,n))$. Note that this running time is still obtained even if we add (a constant number of) different budget-constraints.

	We want to stress that our framework allows to stick to the commonly used edge-based formulation of the problem, in which there is a linear number of variables defining the flow on single edges. In contrast, one is required to use the path-based formulation of the problem when using the original framework of \citet{GargKoenemann}: The flow conservation constraints, which define the ``shape'' of a feasible flow, cannot be directly used in a formulation as a packing problem. These constraints, however, are now modeled by the containment in the cone~$C$. Moreover, note that the only ingredients that we used are that (1) each flow decomposes into flows on some type of basic components ($s$-$t$-paths) and (2) that we are able to handle these basic components efficiently, which allowed us to apply the framework.

	\subsection{Budget-Constrained Minimum Cost Flows}
	\label{sec:Applications:BCMCFP}

	In the \emph{budget-constrained minimum cost flow problem}, the aim is to determine a minimum cost flow subject to a budget constraint of the form~$\sum_{e \in E} b_e \cdot x_e \leq B$, similarly to the budget-constrained maximum flow problem that was studied above. The problem is known to be efficiently solvable in weakly and strongly polynomial-time \citep{BudgetConstrainedComplexityApproximability,BudgetConstrainedMinCostFlows}. In \citep{BudgetConstrainedComplexityApproximability}, a strongly polynomial-time FPTAS was presented for the budget-constrained minimum cost flow problem, which runs in
	\[ \OO\left(\frac{1}{\varepsilon^2} \cdot m \log m \cdot (nm \log m \log\log m + n^3 \log n + nm \log^2 n \log\log n) \right) \]
	time and which uses similar ideas as the ones presented above. In the following, we improve upon this result.

	When considering the (equivalent) circulation based version of the problem in which there are no demands and flow conservation holds at each node, it is easy to see that each optimal flow can be decomposed into flows on simple cycles. Hence, we can restrict our considerations to flows that are contained in the cone~$C$ that is spanned by flows on simple cycles with unit flow value. The result of Theorem~\ref{thm:FPTASMinimizing} cannot be applied to this problem for two reasons: On the one hand, since we are dealing with arbitrary costs, it clearly does no longer hold that $c^T x^{(l)}$ is constant among all flows on cycles with unit flow value. On the other hand, any minimizing oracle would be required to return a vector~$x^{(l)}$ that minimizes $d^T x^{(l)}$ for a given cost vector~$d$, so it would be necessary to find a most negative cycle~$C^*$ in the underlying graph. However, this problem is known to be $\NP$-complete in general since finding a most negative simple cycle in a graph with edge costs~$d_e = -1$ for each $e \in E$ is equivalent to deciding if the graph contains a Hamiltonian cycle (cf. \citet{GareyJohnson}). Nevertheless, we are able to determine a cycle~$C$ with the same \emph{sign} as the most negative cycle~$C^*$ efficiently by computing a minimum mean cycle in $\OO(nm)$~time (cf. \citep{KarpMinimumMeanCycle}). Hence, we can apply both Theorem~\ref{thm:FPTASWeakly} and Theorem~\ref{thm:FPTASStronglySign} to the budget-constrained minimum cost flow problem in order to obtain a weakly polynomial-time FPTAS running in
	\begin{align*}
		\OO\left(nm \cdot \left(\frac{1}{\varepsilon^2} \cdot m\log m + \log\log M - \log \frac{1}{\varepsilon} - \log m - \log\log m \right) \right)
	\end{align*}
	time and, since the minimum mean cycle algorithm of \citet{KarpMinimumMeanCycle} is both strongly polynomial and strongly combinatorial, a strongly polynomial-time FPTAS with a time bound of
	\begin{align*}
		\OO\left(\frac{1}{\varepsilon^2} \cdot m\log m \cdot nm + (nm)^2 \right) = \OO\left(nm \cdot \left(\frac{1}{\varepsilon^2} \cdot m\log m + nm \right) \right).
	\end{align*}

	The latter running time can be improved by making use of the following observation: As it was shown in Lemma~\ref{lem:SignOracleMostViolatedMegiddo}, the sign oracle is incorporated into Megiddo's parametric search in order to determine a minimizer of
    \begin{align}
		\min_{\genfrac{}{}{0pt}{}{l \in \{1,\ldots,k\}}{c^T x^{(l)} > 0}} \dfrac{a^T x^{(l)}}{c^T x^{(l)}} \tag{\ref{eqn:MostViolatedSubproblem}}
	\end{align}
    for a positive cost vector~$a$ and a vector~$c$. In the case of the budget-constrained minimum cost flow problem, this reduces to the determination of a \emph{minimum ratio cycle}~$C$. \citet{MegiddoParallel} derived an algorithm that determines a minimum ratio cycle in a simple graph in $\OO(n^3 \log n + n m \log^2n \log\log n)$ time by making use of a parallel algorithm for the all-pairs shortest path problem in combination with \citeauthor{KarpMinimumMeanCycle}'s minimum mean cycle algorithm \citep{KarpMinimumMeanCycle} as a negative cycle detector in his parametric search. This running time was later improved by \citet{ColeMegiddo} to $\OO(n^3 \log n + n m \log^2n)$. Hence, the strongly polynomial FPTAS can be improved to run in
    \begin{align*}
		\OO\left(\frac{1}{\varepsilon^2} \cdot m\log m \cdot nm + n^3 \log n + n m \log^2n \right) = \OO\left(\frac{1}{\varepsilon^2} \cdot n m^2 \log n \right)
	\end{align*}
	time on simple graphs. In the case of multigraphs, one can use a technique introduced in \citep{BudgetConstrainedComplexityApproximability} in order to transform the graph into an equivalent simple graph in $\OO(nm \log m \log\log m)$~time before applying Cole's minimum ratio cycle algorithm, yielding an FPTAS running in
	\begin{align*}
		\OO\left(\frac{1}{\varepsilon^2} \cdot m\log m \cdot nm + nm \log m \log\log m + n^3 \log n + n m \log^2n \right) = \OO\left(\frac{1}{\varepsilon^2} \cdot n m^2 \log m \right)
	\end{align*}
	time. Hence, in both cases, the strongly polynomial-time FPTAS dominates the FPTASs introduced above. The claimed running time holds even if we add up to $\OO(n)$~different budget constraints to the problem.

    \subsection{Budget-Constrained Minimum Cost Generalized Flows}
    \label{sec:Applications:MCGFP}

    The \emph{generalized minimum cost flow problem} is an extension of the minimum cost flow problem, in which each edge~$e \in E$ is denoted with an additional \emph{gain factor}~$\gamma_e$. The flow that enters some edge~$e$ is multiplied by $\gamma_e$ as soon as it leaves the edge (cf. \citep{WayneGeneralizedFlows}). In the \emph{budget-constrained minimum cost generalized flow problem}, the flow is additionally restricted by a budget-constraint of the form $\sum_{e \in E} b_e \cdot x_e \leq B$ as above.

    The traditional minimum cost generalized flow problem (without an additional budget constraint) is known to be solvable by combinatorial algorithms in weakly polynomial-time \citep{WayneMinCostGeneralizedFlow}. Moreover, there is a strongly polynomial-time FPTAS running in $\OOT\left( \log \frac{1}{\varepsilon} \cdot n^2 m^2 \right)$ time presented by \citet{WayneMinCostGeneralizedFlow}. However, this algorithm makes use of the inherent structure of traditional generalized flows and cannot be extended to the budget-constrained case without further ado. Earlier, \citet{OldhamGeneralizedFlowApproximations} presented an FPTAS for the related generalized minimum cost maximum flow problem with non-negative edge costs with a weakly polynomial running time of $\OOT\left(\frac{1}{\varepsilon^2} \cdot \log \frac{1}{\varepsilon} \cdot n^2 m^2 \cdot \log mCU \right)$, which, as well, cannot be easily extended to the budget-constrained case. Another weakly polynomial-time FPTAS for this problem running%
    \footnote{$M$ denotes the largest absolute value of each number given in the problem instance, assuming gain factor are given as ratios of integers.}\xspace%
    in $\OOT\left( \frac{1}{\varepsilon^2} \cdot n m^2 \cdot (\log \frac{1}{\varepsilon} + \log\log M) \right)$ time was presented by \citet{WayneApproximation}, which is also based on the procedure of \citet{GargKoenemann} and which can be extended to the budget-constrained version of the problem. Using our framework, we present two much simpler FPTASs that work for the generalized minimum cost flow with arbitrary edge costs and that complement the above ones by achieving better time complexities in specific cases.

    Again, we consider the circulation based version of the problem in which the excess is zero at each node~$v \in V$. As it was shown in \citep{WayneMinCostGeneralizedFlow}, every such generalized circulation~$x$ can be decomposed into at most $m$ flows on \emph{unit-gain cycles} and \emph{bicycles}, i.e., flows on cycles~$C$ with $\prod_{e \in C} \gamma_e = 1$ and flows on pairs of cycles~$(C_1,C_2)$ with $\prod_{e \in C_1} \gamma_e < 1$ and $\prod_{e \in C_2} \gamma_e > 1$ that are connected by a path, respectively. Hence, every generalized circulation lies in the cone~$C$ that is generated by flows on such unit-gain cycles and bicycles:
    \begin{subequations}\label{eqn:GMCFP}
	\begin{align}
		\max\ & \sum_{e \in E} c_e \cdot x_e \\
		\text{s.t.}\ & \sum_{e \in E} b_e \cdot x_e \leq B, \\
		& x_e \leq u_e && \text{for each } e \in E, \\
		& x \in C.
	\end{align}
	\end{subequations}
    Note that this formulation does not differ from the models in the previous applications. Instead, the ``structural complexity'' of the problem that comes with the introduction of gain factors is modeled by the containment in the cone~$C$. We are done if we are able to find a separation oracle for the set that generates this cone. \citet{WayneMinCostGeneralizedFlow} shows that there is a unit-gain cycle or bicycle with negative costs in a given network if and only if a specialized system with two variables per inequality (2VPI) is infeasible. Among others, \citet{CohenTVPI} present a procedure that checks the feasibility of such a system and, in case that it is infeasible, provides a ``certificate of infeasibility'', which corresponds to a negative cost unit-gain cycle/bicycle \citep{WayneMinCostGeneralizedFlow}. This procedure runs in $\OOT(n)$~time on $\OO(nm)$ processors. Hence, when used as a separation oracle, we are able to apply Theorem~\ref{thm:FPTASWeakly}. This yields an FPTAS running in
    \begin{align*}
    	\OOT\left(n^2 m \cdot \left(\frac{1}{\varepsilon^2} \cdot m + \log\log M' - \log \frac{1}{\varepsilon} \right) \right)
    \end{align*}
    time, where $M'$ is an upper bound on the absolute costs~$c_e$, fees~$b_e$, and capacities~$u_e$ of the edges~$e \in E$ -- independent of the numbers involved to represent the gain factors. Moreover, since the separation oracle is both strongly polynomial and strongly combinatorial \citep{WayneMinCostGeneralizedFlow}, we can apply Theorem~\ref{thm:FPTASStronglySign} in order to obtain a strongly polynomial-time FPTAS. Using parallelization techniques that are common when using Megiddo's parametric search \citep{MegiddoParallel}, the time that is necessary to find an initial most violated dual constraint using Lemma~\ref{lem:SeparationOracleMostViolatedMegiddo} can be improved from $\OOT((nm)^2)$ to $\OOT(n \cdot (nm + nm \log (nm) + \log(nm) \cdot (n^2m))) = \OOT(n^3 m)$. This yields an FPTAS with a strongly polynomial running time in
    \begin{align*}
    	\OOT\left(\frac{1}{\varepsilon^2} \cdot m \cdot n^2 m + n^3 m \right) = \OOT\left(\frac{1}{\varepsilon^2} \cdot n^2 m^2 \right).
    \end{align*}
    This algorithm embodies the first strongly polynomial-time FPTAS for the budget-constrained generalized minimum cost flow problem and improves upon the running time of the weakly polynomial-time FPTAS. Moreover, this FPTAS outperforms both the algorithm of \citet{OldhamGeneralizedFlowApproximations} and, for large values of $M$ or small values of $\varepsilon$, the algorithm of \citet{WayneApproximation}.

    \subsection{Maximum Flows in Generalized Processing Networks}
    \label{sec:Applications:MFGPN}

    \emph{Generalized processing networks} extend traditional networks by a second kind of capacities, so called \emph{dynamic capacities}, that depend on the flow itself. More precisely, the flow on each edge~$e=(v,w) \in E$ is additionally constrained to be at most $\alpha_e \cdot \sum_{e' \in \delta^+(v)} x_{e'}$ for some edge-dependent constant~$\alpha_e \in (0,1]$, i.e., the flow on $e$ may only make up a specific fraction~$\alpha_e$ of the total flow leaving the starting node~$v$ of $e$. This extension allows to model manufacturing and distillation processes, in particular (cf. \citep{MaxFlowsInGeneralizedProcessingNetworks}).

	Similar to $s$-$t$-paths, the ``basic component'' in the field of generalized processing networks is the notion of so-called \emph{basic flow distribution schemes}. For each node~$v \in V$ with $\delta^+(v) \neq \emptyset$, such a basic flow distribution scheme~$\beta$ is a function that assigns a value in $[0,\alpha_e]$ to each edge~$e \in \delta^+(v)$ such that $\sum_{e \in \delta^+(v)} \beta_e = 1$ and at most one edge~$e \in \delta^+(v)$ fulfills $\beta_e \in (0,\alpha_e)$. Intuitively, a basic flow distribution scheme describes how flow can be distributed to the outgoing edges at each node without violating any dynamic capacity constraint.

	In \citep{MaxFlowsInGeneralizedProcessingNetworks}, the authors show that each flow in a generalized processing network can be decomposed into at most $2m$ flows on basic flow distribution schemes. Hence, we can conclude that each \emph{maximum flow in a generalized processing network} is contained in the cone~$C$ that is generated by unit-flows on basic flow distribution schemes. Moreover, for the problem on acyclic graphs and for a given cost vector~$d$, we can determine a basic flow distribution scheme~$\beta$ that allows a unit-flow~$x$ with minimum costs~$d(x) \colonequals \sum_{e \in E} d_e \cdot x_e$ in linear time~$\OO(m)$ (cf. \citep{MaxFlowsInGeneralizedProcessingNetworks}). By using Theorem~\ref{thm:FPTASMinimizing}, we get an FPTAS for the maximum flow problem in generalized processing networks with a strongly polynomial running-time of $\OO(\frac{1}{\varepsilon^2} \cdot m^2 \log m)$. This result is in particular interesting since it is unknown whether an \emph{exact} solution can be determined in strongly polynomial time since the problem is at least as hard to solve as any linear fractional packing problem (cf. \citep{MaxFlowsInGeneralizedProcessingNetworks} for further details).

	\subsection{Minimum Cost Flows in Generalized Processing Networks}
	\label{sec:Applications:MCFGPN}

	Similar to the previous problem, each \emph{minimum cost flow in a generalized processing network} is contained in the cone that is generated by flows with unit flow value on basic flow distribution schemes. On acyclic graphs, we have the same minimizing oracle as described above. Since the costs~$c_e$ are now arbitrary, we can no longer apply Theorem~\ref{thm:FPTASMinimizing}. Nevertheless, since each minimizing oracle induces a sign oracle, we are able to apply Theorem~\ref{thm:FPTASStronglySign}, which yields an FPTAS for the problem running in strongly polynomial-time
	\begin{align*}
		\OO\left(\frac{1}{\varepsilon^2} \cdot m \log m \cdot m + m^2 \right) = \OO\left(\frac{1}{\varepsilon^2} \cdot m^2 \log m \right).
	\end{align*}
	This matches the time complexity of the maximum flow variant of the problem described in Section~\ref{sec:Applications:MFGPN}.

	\subsection{Maximum Concurrent Flow Problem}
	\label{sec:Applications:MCCFP}

	The \emph{maximum concurrent flow problem} is a variant of the maximum multicommodity flow problem, in which a demand~$d_j$ is given for each commodity~$j$ with source-sink-pair~$(s_j,t_j) \in V \times V$. The task is to determine the maximum value of $\lambda$ such that a fraction~$\lambda$ of all demands is satisfied without violating any edge capacity. While several FPTASs emerged for this problem, the best time bound at present is given by $\OOT\left(\frac{1}{\varepsilon^2} \cdot (m^2 + kn) \right)$ due to \citet{KarakostasConcurrentFlowProblem}, where $k \in \OO(n^2)$ denotes the number of commodities.

	The problem can be approximated efficiently with our framework by using the following novel approach: In order to improve the objective function value by one unit, we need to send $d_j$~units of each commodity. Hence, each concurrent flow decomposes into basic components of the following type: A set of flows on $k$~paths, containing a flow with value~$d_j$ on an $(s_j,t_j)$~path for each commodity~$j$. For a given (positive) cost vector, a basic component with minimum costs can be found by determining a shortest path between each commodity. Since \citets{Dijkstra} algorithm computes the shortest paths from one node to \emph{every} other node, we only need to apply it $\min\{k,n\}$~times (once for each of the distinct sources of all commodities), which yields a minimizing oracle running in $\OO(\min\{k,n\} \cdot (m + n\log n))$~time and an FPTAS running in $\OOT\left(\frac{1}{\varepsilon^2} \cdot m^2 \cdot \min\{k,n\}\right)$~time according to Theorem~\ref{thm:FPTASMinimizing}. This algorithm has a worse time complexity than the one of \citet{KarakostasConcurrentFlowProblem}. Nevertheless, the application of the presented framework is much simpler than the algorithm given in \citep{KarakostasConcurrentFlowProblem} (and even matches its time complexity in the case of sparse graphs with a large number of commodities) and inherently allows the incorporation of additional budget-constraints.

	\subsection{Maximum Weighted Multicommodity flow Problem}
	\label{sec:Applications:MWMCFP}

	The \emph{maximum weighted multicommodity flow problem} is a generalization of the maximum multicommodity flow problem, in which a positive weight~$c_j$ is denoted with each commodity and the aim is to maximize the weighted flow value. The problem is known to be solvable in $\OOT\left( \frac{1}{\varepsilon^2} \cdot m^2 \min\{\log C, k\} \right)$ time as shown by \citet{FleischerMulticommodity}, where $C$ denotes the largest ratio of any two weights of commodities.

	Similar to the multicommodity flow problem, each feasible flow decomposes into flows with unit flow value on single $(s_j,t_j)$~paths. Moreover, the determination of such a path with minimal costs reduces to $\min\{k,n\}$ shortest path computations with possibly negative costs, similar to the maximum concurrent flow problem considered above. Using similar ideas as in the case of the budget-constrained minimum cost flow problem (Section~\ref{sec:Applications:BCMCFP}), this would yield an FPTAS with a running time in $\OOT\left( \frac{1}{\varepsilon^2} \cdot \min\{n,k\}\cdot nm + \min\{n,k\} \cdot n^3 \right)$.

	This running time can be improved as follows: As above, we can consider the cone~$C$ to be spanned by flows on $(s_j,t_j)$~paths for each commodity, but where each flow between any $(s_j,t_j)$-pair now has flow value~$\frac{1}{c_j}$. Each vector in the ground set~$S$ then has uniform costs. In order to apply Theorem~\ref{thm:FPTASMinimizing}, we need to be able to determine a cost-minimal vector with respect to a given positive cost vector~$d$. One straight-forward way to obtain such a minimizing oracle is to compute a shortest path for each commodity~$j$ using the edge lengths~$\frac{d_e}{c_j}$ for each $e \in E$ and to choose a shortest path among all commodities. This would result in a $\OOT\left(\frac{1}{\varepsilon^2} \cdot m^2 \cdot k \right)$~time FPTAS, similar to the previous application. However, it suffices to compute only $\min\{n,k\}$~shortest paths per iteration, which can be seen as follows: For each node~$s$ out of the set of the $\min\{k,n\}$ distinct source nodes, we perform two steps: We first compute the shortest path distance to every other node using \citets{Dijkstra} algorithm. Afterwards, for each node that corresponds to the sink~$t_j$ of a commodity~$j$ with source~$s_j=s$, we multiply the distance from $s_j$ to $t_j$ by $\frac{1}{c_j}$. By repeating this procedure for each source and keeping track of the overall shortest path, we obtain a minimizing oracle. This yields an FPTAS running in $\OOT\left( \frac{1}{\varepsilon^2} \cdot m^2 \cdot \min\{n,k\} \right)$~time, which complements the result of \citet{FleischerMulticommodity}. This example shows that more sophisticated definitions of the ground set~$S$ and the cone~$C$ may improve the running time of the procedure.

	Finally, using this approach, we can even further improve the algorithm to obtain a time bound of $\OOT\left( \frac{1}{\varepsilon^2} \cdot m^2 \right)$ using an idea that was applied by \citet{FleischerMulticommodity} to the (unweighted) multicommodity flow problem: For an initially tight lower bound~$\u{L}$ on the length of a shortest path for any commodity (which can be computed in $\OOT(\min\{n,k\} \cdot m)$~time as above at the beginning), we can stick to one commodity~$j$ in each iteration of the overall procedure and compute a single shortest path from the source~$s_j$ to the sink~$t_j$. Once the length of this shortest path multiplied by $\frac{1}{c_j}$ becomes as large as $(1 + \varepsilon) \cdot \u{L}$, we go on to the next commodity and continue the procedure. After each commodity was considered, we update $\u{L}$ to $(1 + \varepsilon)\cdot \u{L}$ and continue with the first commodity. Following the lines of \citet{FleischerMulticommodity}, this yields an FPTAS running in $\OOT\left( \frac{1}{\varepsilon^2} \cdot (m^2 + km) \right)$~time as there are $\OOT\left( \frac{1}{\varepsilon^2} \cdot k \right)$ shortest path computations that lead to a change of the commodity. However, since \citets{Dijkstra} algorithm computes the distance to every other node, we only need to consider $\min\{k,n\}$ different nodes by grouping commodities with the same source as above, which reduces the running time to $\OOT\left( \frac{1}{\varepsilon^2} \cdot m^2 \right)$ (see \citep{FleischerMulticommodity} for details on the algorithm). Although \citet{FleischerMulticommodity} both considered this technique and introduced the maximum weighted multicommodity flow problem, she refrained from applying this procedure to the problem.

	\subsection{Maximum Spanning Tree Packing Problem}
	\label{sec:Applications:SpanningTrees}

	In the \emph{maximum spanning tree packing problem}, one is given an undirected graph~$G=(V,E)$ with positive edge capacities~$u_e$. Let $\mathcal{T}$ denote the set of all spanning trees in $G$. The aim is to find a solution to the problem
	\begin{align*}
		\max\ & \sum_{T \in \mathcal{T}} x_T \\
		\text{s.t.}\ & \sum_{T \in \mathcal{T}: e \in T} x_T \leq u_e && \forall\ e \in E, \\
		& x_T \geq 0 && \forall T \in \mathcal{T},
	\end{align*}
	i.e., one seeks to pack as many spanning trees as possible (in the fractional sense) without violating any edge capacity. While the problem was investigated in a large number of publications, the fastest (exact) algorithm for the problem is due to \citet{GabowPackingTrees} and runs in $\OO\left( n^3 m \log \frac{n^2}{m} \right)$~time.

	Let $S$ denote the set of incidence vectors~$\chi_T$ of spanning trees~$T \in \mathcal{T}$, where $(\chi_T)_e = 1$ if $e \in T$ and $(\chi_T)_e = 0$ else. Since each spanning tree contains exactly~$n-1$ edges, the problem can be stated in an equivalent edge-based fashion as follows:
	\begin{align*}
		\max\ & \frac{1}{n-1} \cdot \sum_{e \in E} x_e \\
		\text{s.t.}\ & x_e \leq u_e && \forall\ e \in E, \\
		& x \in C.
	\end{align*}
	In order to apply Theorem~\ref{thm:FPTASMinimizing} (which is eligible since each spanning tree contributes equally to the objective function value), we need a minimizing oracle for the set~$S$. However, this simply reduces to the determination of a minimum spanning tree, which can be done in $\mathcal{O}(m \cdot \alpha(m,n))$~time, where $\alpha(m,n)$ denotes the inverse Ackermann function (cf. \citep{ChazelleMinimumSpanningTree}). This yields a strongly polynomial-time FPTAS for the problem running in $\OO\left( \frac{1}{\varepsilon^2} \cdot m^2 \log m \cdot \alpha(m,n)\right)$~time.

	Our framework also applies to a \emph{weighted} version of the problem: Assume that each edge is labeled with an additional cost~$c_e$ (with arbitrary sign) and assume that the weight~$c(T)$ of each spanning tree~$T \in \mathcal{T}$ is defined to be the sum of the weights of its edges, i.e., $c(T) \colonequals \sum_{e \in E} c_e$. The aim is then to maximize the objective function~$\sum_{T \in \mathcal{T}} c(T) \cdot x_T$. As above, we can stick to an equivalent edge-based formulation using the objective function~$\frac{1}{n-1} \cdot \sum_{e \in E} c_e \cdot x_e$. The minimum spanning tree algorithm can then be used as a sign oracle, which allows us to apply Theorem~\ref{thm:FPTASStronglySign} to the problem. This yields an FPTAS for the maximum weighted spanning tree packing problem running in strongly polynomial time
	\[ \OO\left( \frac{1}{\varepsilon^2} \cdot m^2 \log m \cdot \alpha(m,n) + m^2 \cdot \alpha^2(m,n) \right) = \OO\left( \frac{1}{\varepsilon^2} \cdot m^2 \log m \cdot \alpha(m,n) \right). \]
	To the best of our knowledge, this is the first combinatorial approximation algorithm for this problem.

	\subsection{Maximum Matroid Base Packing Problem}
	\label{sec:Applications:Matroids}

	Having a closer look at the results of Section~\ref{sec:Applications:SpanningTrees}, one might expect that they can be generalized to matroids: As spanning trees form the bases of graphic matroids, the presented ideas suggest that the framework can also be applied to packing problems over general matroids. In the \emph{maximum matroid base packing problem}, a matroid~$M(S,\mathcal{I})$ with ground set~$S \colonequals \{1,\ldots,m\}$ and independent sets in $\mathcal{I}$ is given as well as a positive capacity~$u_i \in \mathbb{N}_{> 0}$ for each $i \in S$. For $r$ to be the rank function of $M$, let $\mathcal{B} \subset \mathcal{I}$ denote the set of bases such that $I \in \mathcal{B}$ if and only if $r(I) = r(S)$. The aim of the problem is to pack as many bases of $M$ as possible (in the fractional sense) without violating any capacity constraints:
	\begin{align*}
		\max\ & \sum_{I \in \mathcal{B}} x_I \\
		\text{s.t.}\ & \sum_{I \in \mathcal{B}: i \in I} x_I \leq u_i && \forall\ i \in S, \\
		& x_I \geq 0 && \forall I \in \mathcal{B}.
	\end{align*}
	As it is common when dealing with matroids, we assume that the matroid is described by an \emph{independence testing oracle}, which checks if some set~$S' \subseteq S$ is independent in $M$ (cf. \citep{Schrijver2}). Let $F(m)$ denote the running time of this oracle. As it is shown in \citep{Schrijver2}, the problem can be solved in $\OO(m^7 \cdot F(m))$~time using a result derived by \citet{CunninghamMatroid}.

	As it was the case in the maximum spanning tree packing problem in Section~\ref{sec:Applications:SpanningTrees}, the problem can be formulated in an equivalent element-based fashion as follows:
	\begin{align*}
		\max\ & \frac{1}{r(S)} \cdot \sum_{i \in S} x_i \\
		\text{s.t.}\ & x_i \leq u_i && \forall\ i \in S, \\
		& x \in C,
	\end{align*}
	where the cone~$C$ is spanned by the incidence vectors of bases in $\mathcal{B}$. In order to apply our framework, we need to be able to handle these bases efficiently. However, as we are dealing with matroids, we can find a cost-minimal basis~$I \in \mathcal{B}$ of $M$ with respect to a given cost vector~$d$ just by applying the Greedy algorithm (cf. \citep{Schrijver2}): Starting with $I = \emptyset$, we sort the elements by their costs and iteratively add each element in the sorted sequence unless the independence test fails. This yields a minimizing oracle for $\mathcal{B}$ running in $\OO(m \cdot F(m) + m\log m)$~time. Hence, we immediately get an FPTAS for the maximum matroid base packing problem running in strongly polynomial time~$\OO\left(\frac{1}{\varepsilon^2} \cdot m^2 \log m \cdot (F(m) + \log m) \right)$ according to Theorem~\ref{thm:FPTASMinimizing}.

	As it was the case in Section~\ref{sec:Applications:SpanningTrees}, we can also extend our results to a \emph{weighted} version of the problem: Assume we are additionally given costs~$c_i \in \mathbb{Z}$ and want to maximize $\sum_{I \in \mathcal{B}} c(I) \cdot x_I$, where $c(I) \colonequals \sum_{i \in I} c_i$. Equivalently, we can also maximize $\frac{1}{r(S)} \cdot \sum_{i \in S} c_i \cdot x_i$ in the element-based formulation of the problem. Using the above minimizing oracle as a sign oracle, we can apply both Theorem~\ref{thm:FPTASWeakly} and, in case that the independence testing oracle is strongly polynomial and strongly combinatorial, Theorem~\ref{thm:FPTASStronglySign} to the problem. This yields two FPTASs for the problem running in
	\[ \OO\left( (m \cdot F(m) + m \log m) \cdot \left(\frac{1}{\varepsilon^2} \cdot m \log m + \log\log M - \log\frac{1}{\varepsilon} - \log m - \log\log m \right) \right) \]
	and
	\[ \OO\left( \frac{1}{\varepsilon^2} \cdot m^2 \log m \cdot (F(m) + \log m) + (m \cdot F(m) + m \log m)^2 \right) \]
	time, respectively. To the best of our knowledge, no other polynomial-time algorithm is known for this problem.

	\section{Conclusion}

	We investigated an extension of the fractional packing framework by \citet{GargKoenemann} that generalizes their results to fractional packing problems over polyhedral cones. By combining a large diversity of known techniques, we derived a framework that can be easily adopted to a large class of network flow and packing problems. This framework may in particular be applicable even if the cone has an exponential-sized representation as it only relies on a strongly polynomial number of oracle calls in order to gather information about the cone. In many cases, its application allows the derivation of approximation algorithms that are either the first ones with a strongly polynomial running time or the first combinatorial ones at all. For a large variety of applications, we were even able to complement or improve existing results.

	The presented paper raises several questions for future research. On the one hand, we believe that our results can be applied to a much larger set of problems and can be used to obtain combinatorial FPTASs for complex problems without much effort. It may also be possible that the results continue to hold for even weaker kinds of oracles. On the other hand, as our framework is based on the one of \citet{GargKoenemann} in its core, all of the derived approximation algorithms have a running time in $\Omega(\frac{1}{\varepsilon^2} \cdot m\log m \cdot n)$ and, in particular, have a quadratic dependency on $\frac{1}{\varepsilon}$. It may be possible to achieve a subquadratic dependency on $\frac{1}{\varepsilon}$ by relying on other approaches such as the one of \citet{BienstockIyengarFractionalPacking}. Nevertheless, it seems that this trade comes with a worse dependence on other parameters, a worse practical performance, or a worse generality of the presented results.

	\bibliographystyle{plainnat}
	\bibliography{../../../literature.bib}
\end{document}